\theoremstyle{definition} 
\theoremstyle{theorem} 
\newtheorem{theorem}{Theorem}
\newtheorem{proposition}{Proposition}
\newcommand{\sect}[1]{\setcounter{equation}{0}\section{#1}}
\def\be{\begin{equation}}
\def\ee{\end{equation}}
\def\bea{\begin{eqnarray}}
\def\eea{\end{eqnarray}}
\DeclareMathOperator\spn{span}
\newcommand{\kk}{\kappa}
\newcommand{\Sk}{{\rm\ \!S}}            
\newcommand{\Ck}{{\rm\ \!C}}           
 \newcommand{\Tk}{{\rm\ \!T}}
  \newcommand{\otra}{\lambda}
\begin{document}

\
  \vskip0.5cm

\begin{center}
\baselineskip 24 pt {\Large \bf  
{Higher-order superintegrable momentum-dependent Hamiltonians on curved spaces from the classical Zernike system}}
\end{center}

\medskip 
\medskip

\begin{center}

{\sc Alfonso Blasco$^1$, Ivan Gutierrez-Sagredo$^{2}$ and Francisco J.  Herranz$^1$ }

\medskip

{$^1$Departamento de F\'isica, Universidad de Burgos, 
09001 Burgos, Spain}

{$^2$Departamento de Matem\'aticas y Computaci\'on, Universidad de Burgos, 
09001 Burgos, Spain}

 \medskip
 
e-mail: {\href{mailto:ablasco@ubu.es}{ablasco@ubu.es}, \href{mailto:igsagredo@ubu.es}{igsagredo@ubu.es}, \href{mailto:fjherranz@ubu.es}{fjherranz@ubu.es}}

\end{center}

\medskip

\begin{abstract}
\noindent
We consider the classical momentum- or velocity-dependent two-dimensional Hamiltonian given by
$$
\mathcal H_N =  p_1^2 + p_2^2 +\sum_{n=1}^N \gamma_n(q_1 p_1 + q_2 p_2)^n  ,
$$
where $q_i$ and $p_i$ are generic canonical variables, $\gamma_n$ are arbitrary coefficients, and   $N\in \mathbb N$. For $N=2$, being both $\gamma_1,\gamma_2$ different from zero, this reduces to the 
classical Zernike system.  We prove that $\mathcal H_N$  always provides a superintegrable system (for any value of  $\gamma_n$ and $N$)  by obtaining the corresponding constants of the motion explicitly, which turn out to be of  higher-order in the momenta.  Such generic results are not only applied to the Euclidean plane, but also to the sphere and the hyperbolic plane. In the latter curved spaces, $\mathcal H_N $ is   expressed in geodesic polar coordinates showing that such a new superintegrable Hamiltonian   can be regarded as a superposition of   the isotropic 1\,:\,1   curved (Higgs) oscillator  with even-order     anharmonic curved oscillators plus another superposition of  higher-order momentum-dependent potentials.
  Furthermore, the symmetry algebra determined by the 
constants of the motion is also studied, giving rise to a $(2N-1)$th-order polynomial  algebra.  As a byproduct, the   Hamiltonian $\mathcal H_N $ is  interpreted as a family of  superintegrable perturbations of the classical Zernike system. Finally,   it is  shown that $\mathcal H_N$ (and so the Zernike system as well) is endowed with a   Poisson $\mathfrak{sl}(2,\mathbb R)$-coalgebra symmetry which would allow for further possible generalizations that are also  discussed.

\end{abstract}

\medskip
\medskip

 \noindent {MSC}: 37J35, 70H06, 22E60, 17B62

  \medskip

  \noindent {PACS}: 02.30.Ik, 45.20.Jj, 02.20.Sv, 02.40.Ky

  \medskip

    \noindent{Keywords}: Integrable systems; Curvature; Sphere; Hyperbolic plane; Curved oscillator;  Poisson   coalgebras;  Integrable perturbations; symmetry algebras 
 \newpage

\tableofcontents

\sect{Introduction}

Classical and quantum momentum- or velocity-dependent Hamiltonian systems have been extensively studied in the literature over many decades   mainly due to their relevant,  wide and varied physical applications.
Without trying to be exhaustive, let us mention  that linear momentum-dependent Hamiltonians have been considered from different viewpoints in~\cite{Hietarinta84,Hietarinta85,Gunn1985,DGRW1985,IcBo1988,McSW2000,Ranada2003,Pucacco2004,Ranada2005,MB2008,SozTsi2015,Tsi2015,Yehia2016,BKS2020,FSW2020} and  quadratic momentum-dependent ones have been analyzed in~\cite{RFL62,McM1965,FGS1967,SesmaVento1978,SBB2012}. In addition, exponentials of momentum-dependent potentials  ($V\propto {\rm e}^{-\,p^2}$) have also been considered   in~\cite{DDR1987,BG1988,CH1995momentumpotentials,LGXZL2003} and another more involved 
 momentum-dependent potential was recently introduced in~\cite{NMS2020}; see references therein in all the aforementioned works.

Furthermore, from a completely different perspective,  we stress that   quantum groups~\cite{ChariPressley1994}  have been applied to classical and quantum (super)integrable Hamiltonians through both deformed and undeformed coalgebras in~\cite{BR98,BBHMR2009}. Following this coalgebra  symmetry approach,  several classes of  momentum-dependent classical Hamiltonians have been constructed in~\cite{BBHMR2009,Ballesteros2000,Ballesteros2007,Ballesteros2009} giving rise to quasi-maximally superintegrable systems, {\em i.e.}~in arbitrary dimension $d$ they are endowed, by construction, with $(2d-3)$ functionally independent constants of the motion (besides the Hamiltonian). Hence one additional constant of the motion is left to ensure maximal superintegrability.

In this paper, we shall consider a large class of two-dimensional (2D) higher-order momentum-dependent systems comprised within the Hamiltonian given by
\begin{equation}
\mathcal H_N =  p_1^2 + p_2^2 +\sum_{n=1}^N \gamma_n(q_1 p_1 + q_2 p_2)^n ,
\label{00}
\end{equation}
where $q_i$ and $p_i$ are generic canonical variables (with Poisson bracket $\{q_i,p_j\}=\delta_{ij}$), $\gamma_n$ are arbitrary coefficients, and the index $N\in \mathbb N$. Therefore, as particular cases,  we find that for $N=1$ we shall deal with linear momentum-dependent  Hamiltonians and for for $N=2$  with quadratic momentum-dependent ones, but when $N>2$ we shall obtain cubic, quartic\dots momentum-dependent  Hamiltonians.

The underlying motivation to consider $\mathcal H_N$ (\ref{00}) is that this is just the natural generalization (for arbitrary $N$) of the  superintegrable classical Zernike system formerly introduced in~\cite{PWY2017zernike} (see also~\cite{Fordy2018,Wolf2020}), which is  recovered for $N=2$
   with $\gamma_1\ne 0$ and $\gamma_2\ne 0$. Recall that the original Zernike system is  properly quantum~\cite{Zernike1934} and   
    as a quantum superintegrable Hamiltonian has been extensively studied   in~\cite{PSWY2017,PWY2017a,Atakishiyev2017,Fordy2018,Wolf2020,Atakishiyev2019}.     Moreover, we observe that the Hamiltonian $\mathcal H_N$ (\ref{00}) is naturally endowed with  a Poisson $\mathfrak{sl}(2,\mathbb R)$-coalgebra symmetry~\cite{Ballesteros2007,BBHMR2009}.

The aim of this paper is twofold. On the  one hand, we   explicitly prove that the   Hamiltonian $\mathcal H_N$ (\ref{00}) is superintegrable for any $N$ and for any value of the coefficients $\gamma_n$. And, on the other hand, we   apply this result not only to the   flat Euclidean plane $\mathbf E^2$, but also to the curved sphere $\mathbf S^2$ and the hyperbolic plane $\mathbf H^2$.

The structure of the paper    is as follows. In the next section we review the    classical Zernike Hamiltonian on  $\mathbf E^2$ along with its interpretation on   $\mathbf S^2$ and $\mathbf H^2$ and, furthermore, we describe its underlying Poisson $\mathfrak{sl}(2,\mathbb R)$-coalgebra symmetry.
This allows us to propose  $\mathcal H_N$ (\ref{00})  as its natural generalized Hamiltonian.  In Section~\ref{s31} we prove that   $\mathcal H_N$  always determines a superintegrable system on $\mathbf E^2$  (for any   $\gamma_n$ and $N$) by obtaining explicitly the   constants of the motion, which turn out to be 
of higher-order in the momenta. The corresponding interpretation on  $\mathbf S^2$ and $\mathbf H^2$ is   performed in Section~\ref{s32}. In particular, we introduce the so called geodesic polar coordinates~\cite{RS,conf,BaHeMu13}, which are the curved generalization of the usual Euclidean polar coordinates. In this way, we show that $\mathcal H_N$ can alternatively be regarded as a superposition of   the isotropic 1\,:\,1   curved (Higgs) oscillator  with even-order     anharmonic curved oscillators plus another superposition of  higher-order momentum-dependent potentials.

Such general results are illustrated in Section~\ref{s4} for $N\le 8$ and, moreover, the associated polynomial symmetry algebra, defined through the constants of the motion, is also computed leading to a  $(2N-1)$th-order generalization of    the   well-known cubic  Higgs Poisson algebra~\cite{PWY2017zernike,Higgs}. As a byproduct, our results are specifically applied to the classical Zernike Hamiltonian in Section~\ref{s5}, being interpreted  as superintegrable perturbations. Their real   part of  the  trajectories are also plotted up to $N=6$.

We remark that the underlying Poisson $\mathfrak{sl}(2,\mathbb R)$-coalgebra symmetry of  $\mathcal H_N$  
naturally suggests further   possible generalizations. These open problems    along with  the application to (1+1)D Lorentzian spacetimes of constant curvature (Minkowskian and (anti-)de Sitter spaces) are    discussed in the last section with some detail. 
 To end with, we stress that a quantization of $\mathcal H_N$   is also addressed in the last section. The guiding idea is to replace the  
Poisson $\mathfrak{sl}(2,\mathbb R)$-coalgebra symmetry by a  Lie  $\mathfrak{gl}(2)$-coalgebra symmetry. Anyhow,    serious ordering problems arise in the constants of the motion, so that  our proposal for a  quantum $\mathcal H_N$   Hamiltonian also remains as an open problem.

\newpage


\sect{The classical Zernike system revisited}
\label{s2}

The original quantum  Zernike system was     introduced in~\cite{Zernike1934} and, very recently, deeply analysed  in~\cite{PSWY2017,PWY2017a,Atakishiyev2017,Fordy2018,Wolf2020,Atakishiyev2019}  (see also references therein) as a quantum superintegable Hamiltonian. Such a Hamiltonian system is   defined on the 2D  Euclidean plane $\mathbf E^2$, which has   a potential depending on  both linear and quadratic terms on the quantum momenta operators. Its classical  counterpart 
was formerly presented and studied in~\cite{PWY2017zernike} (see also~\cite{Fordy2018,Wolf2020}), which possesses quadratic in the momenta constants of the motion.

The superintegrable classical Zernike system is the cornerstone of our construction of new higher-order superintegrable momentum-dependent classical Hamiltonians on a 2D Riemannian space of constant (Gaussian) curvature $\kappa$, so covering the flat Euclidean space $\mathbf E^2$ ($\kappa=0$),  the sphere  $\mathbf S^2$ ($\kappa>0$) and the hyperbolic or Lobachevski space  $\mathbf H^2$  ($\kappa<0$). With this aim,  we review  in this section  the main results on the 
known classical Zernike system along with its interpretation on curved spaces and, furthermore, we present   new properties related with Poisson $\mathfrak{sl}(2,\mathbb R)$-coalgebra symmetry~\cite{BR98,Ballesteros2007,BBHMR2009,Latini2019,Latini2021} which, to the best of our knowledge, have not been considered in the literature yet.

The main superintegrability properties (in the Liouville sense~\cite{Perelomov}) of the classical Zernike system are established in the following statement.

 \begin{theorem}
\cite{PWY2017zernike}
\label{teor0}
Let $\{ q_1,q_2,p_1,p_2\}$ be a set of canonical variables with Poisson brackets $\{ q_i,p_j\} = \delta_{ij}$. The classical Zernike  Hamiltonian   on the Euclidean plane, $(q_1,q_2)\equiv (x,y)\in \mathbb R^2$,  is given by 
\begin{equation}
\mathcal H_{\rm Zk} =  p_1^2 + p_2^2 + \gamma_1(q_1 p_1 + q_2 p_2)+ \gamma_2(q_1 p_1 + q_2 p_2)^2 ,
\label{za}
\end{equation}
where $\gamma_1$ and  $\gamma_2$   are arbitrary parameters.

\noindent
(i) The Hamiltonian $\mathcal H_{\rm Zk} $ has three (quadratic in the momenta) constants of the motion:
\bea
&& \mathcal C = q_1 p_2 - q_2 p_1 , \nonumber\\[2pt]
&&  \mathcal I=  p_2^2 + \gamma_1\,  q_2p_2   + \gamma_2  \bigl( q_1^2 + q_2^2 \bigr)p_2^2 , \label{zb}\\
&&  \mathcal I'=  p_1^2 + \gamma_1\, q_1 p_1   + \gamma_2 \bigl( q_1^2 + q_2^2 \bigr)p_1^2 .
\nonumber
\eea
(ii) The above functions fulfil the relation
\be
\mathcal \mathcal H_{\rm Zk} = \mathcal I + \mathcal I' - \gamma_{2}\,\mathcal C^{2}  .
\label{zc}
\ee
(iii)  The sets  $\{\mathcal H_{\rm Zk} , \mathcal I, \mathcal C\}$ and   $\{\mathcal H_{\rm Zk} , \mathcal I', \mathcal C\}$   are    formed by three functionally independent functions so that $\mathcal H_{\rm Zk}$ is a superintegrable Hamiltonian.

\noindent
(iv) The three functions defined by
\bea
&&\mathcal L_1:= \mathcal C/2, \qquad  \mathcal L_2:=\bigl(\mathcal I'  -  \mathcal I \bigr)/2 ,\nonumber\\ 
&&\mathcal L_3:= \{  \mathcal L_1,\mathcal L_2\}= \left( 1+ \gamma_2 \bigl(q_1^2+q_2^2\bigr)  \right)p_1 p_2 + \tfrac 12 \gamma_1 (q_1 p_2 + q_2 p_1) ,
\label{zzd}
\eea
satisfy the Poisson brackets
\be
\{ \mathcal L_1,\mathcal  L_2\}=\mathcal L_3, \qquad \{ \mathcal L_1,\mathcal  L_3\}=-\mathcal L_2, \qquad  \{ \mathcal L_2, \mathcal L_3\}=-\mathcal L_1\left( \gamma_1^2 + 2 \gamma_2 \mathcal H_{\rm Zk}+8  \gamma_2^2  \mathcal L_1^2  \right) .
\label{zd}
\ee
\end{theorem}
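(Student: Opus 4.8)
The plan is to reduce the whole statement to elementary Poisson brackets among the four rotation- and scaling-adapted building blocks
$$
\mathcal D := q_1 p_1 + q_2 p_2,\qquad P:= p_1^2 + p_2^2,\qquad Q:= q_1^2 + q_2^2,\qquad \mathcal C = q_1 p_2 - q_2 p_1,
$$
which satisfy the Poisson $\mathfrak{sl}(2,\mathbb R)$ relations $\{Q,\mathcal D\}=2Q$, $\{\mathcal D,P\}=2P$, $\{Q,P\}=4\mathcal D$, together with $\{\mathcal C,Q\}=\{\mathcal C,\mathcal D\}=\{\mathcal C,P\}=0$ and the quadratic Casimir identity $QP-\mathcal D^{2}=\mathcal C^{2}$. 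The key structural observation is that $\mathcal H_{\rm Zk}=P+\gamma_1\mathcal D+\gamma_2\mathcal D^{2}$ depends only on $\mathcal D$ and $P$, hence is rotationally invariant, while $\mathcal I,\mathcal I'$ (and later $\mathcal L_3$) are the ``one-momentum'' analogues obtained by keeping a single $p_i$ but the full $Q$.

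I would dispatch (ii) first, since it is immediate: $\mathcal I+\mathcal I'=(1+\gamma_2 Q)P+\gamma_1\mathcal D$, and substituting $QP=\mathcal D^{2}+\mathcal C^{2}$ turns this into $\mathcal H_{\rm Zk}+\gamma_2\mathcal C^{2}$. For (i), $\{\mathcal H_{\rm Zk},\mathcal C\}=0$ is clear because $\mathcal C$ brackets trivially with $P$ and $\mathcal D$. It then suffices to verify $\{\mathcal H_{\rm Zk},\mathcal I\}=0$, after which $\{\mathcal H_{\rm Zk},\mathcal I'\}=0$ follows from (ii). This check reduces to the handful of elementary brackets $\{P,p_2^{2}\}=0$, $\{P,q_2p_2\}=-2p_2^{2}$, $\{P,Qp_2^{2}\}=-4\mathcal D p_2^{2}$, $\{\mathcal D,p_2^{2}\}=2p_2^{2}$, and $\{\mathcal D,q_2p_2\}=\{\mathcal D,Qp_2^{2}\}=0$ (the last two because $q_2p_2$ and $Qp_2^{2}$ have zero scaling weight under the flow of $\mathcal D$); collecting terms, the $\gamma_1$ pieces cancel in pairs and the $\gamma_2$ pieces cancel $-4\gamma_2\mathcal D p_2^{2}$ from $\{P,\mathcal I\}$ against $+4\gamma_2\mathcal D p_2^{2}$ coming from the $\mathcal D^{2}$ term.

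For (iii) I would write out the $3\times 4$ Jacobian of $(\mathcal H_{\rm Zk},\mathcal I,\mathcal C)$ with respect to $(q_1,q_2,p_1,p_2)$ and exhibit one $3\times 3$ minor that is not identically zero — the leading terms in the momenta already force the three differentials to be generically independent — so the rank is $3$ on a dense open set, and likewise for $(\mathcal H_{\rm Zk},\mathcal I',\mathcal C)$; this is precisely the claimed superintegrability. For (iv), $\mathcal L_1=\mathcal C/2$ and $\mathcal L_2=(\mathcal I'-\mathcal I)/2$ are constants of the motion by (i), $\mathcal L_3:=\{\mathcal L_1,\mathcal L_2\}$ by definition, and the efficient route is to establish the single algebraic identity
$$
4\mathcal L_2^{2}+4\mathcal L_3^{2}=\bigl(\mathcal H_{\rm Zk}+4\gamma_2\mathcal L_1^{2}\bigr)^{2}+4\gamma_1^{2}\mathcal L_1^{2}.
$$
This follows by expanding $(2\mathcal L_2)^{2}+(2\mathcal L_3)^{2}$ with $A:=1+\gamma_2 Q$ and recognising $(p_1^{2}-p_2^{2})^{2}+4p_1^{2}p_2^{2}=P^{2}$, $(p_1^{2}-p_2^{2})(q_1p_1-q_2p_2)+2p_1p_2(q_1p_2+q_2p_1)=\mathcal D P$, and $(q_1p_1-q_2p_2)^{2}+(q_1p_2+q_2p_1)^{2}=QP$, which collapses the sum to $(AP+\gamma_1\mathcal D)^{2}+\gamma_1^{2}\mathcal C^{2}$; finally $AP+\gamma_1\mathcal D=\mathcal H_{\rm Zk}+\gamma_2\mathcal C^{2}=\mathcal H_{\rm Zk}+4\gamma_2\mathcal L_1^{2}$. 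Writing $\mathcal L_3^{2}=F(\mathcal L_1,\mathcal L_2,\mathcal H_{\rm Zk})$ with $F=\tfrac14(\mathcal H_{\rm Zk}+4\gamma_2\mathcal L_1^{2})^{2}+\gamma_1^{2}\mathcal L_1^{2}-\mathcal L_2^{2}$, and using that $\mathcal L_1,\mathcal L_2,\mathcal H_{\rm Zk}$ are pairwise in involution apart from $\{\mathcal L_1,\mathcal L_2\}=\mathcal L_3$, the Leibniz rule gives $2\mathcal L_3\{\mathcal L_1,\mathcal L_3\}=\{\mathcal L_1,F\}=\partial_{\mathcal L_2}F\cdot\mathcal L_3$ and $2\mathcal L_3\{\mathcal L_2,\mathcal L_3\}=\{\mathcal L_2,F\}=-\partial_{\mathcal L_1}F\cdot\mathcal L_3$; cancelling $\mathcal L_3$ (valid where $\mathcal L_3\neq0$, hence everywhere by continuity) yields $\{\mathcal L_1,\mathcal L_3\}=-\mathcal L_2$ and $\{\mathcal L_2,\mathcal L_3\}=-\tfrac12\partial_{\mathcal L_1}F=-\mathcal L_1\bigl(\gamma_1^{2}+2\gamma_2\mathcal H_{\rm Zk}+8\gamma_2^{2}\mathcal L_1^{2}\bigr)$, which is exactly (\ref{zd}).

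The only genuine obstacle is the bookkeeping in the algebraic identity of step (iv): the expansion of $(2\mathcal L_2)^{2}+(2\mathcal L_3)^{2}$ and the recognition of the three perfect combinations is a few lines of careful but entirely elementary algebra. Minor loose ends are the exhibition of a nonvanishing Jacobian minor in (iii) and the cancellation of $\mathcal L_3$ in (iv), both settled by the density of the locus where the relevant polynomial is nonzero. One could alternatively obtain $\{\mathcal L_1,\mathcal L_3\}=-\mathcal L_2$ directly from the observation that $\mathcal L_2$, and hence $\mathcal L_3$, is assembled from the ``angle-$2\theta$'' doublets $(p_1^{2}-p_2^{2},\,2p_1p_2)$ and $(q_1p_1-q_2p_2,\,q_1p_2+q_2p_1)$, on which $\{\mathcal C,\cdot\}$ acts as the rotation vector field, so that $\{\mathcal L_1,\{\mathcal L_1,\mathcal L_2\}\}=-\mathcal L_2$; but the identity-based derivation has the advantage of producing all three brackets in (\ref{zd}) at once and of making the curved-space generalization transparent.
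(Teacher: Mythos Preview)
The paper does not provide its own proof of this theorem: it is stated with the citation \cite{PWY2017zernike} and treated as a known background result, so there is nothing to compare your argument against line by line. That said, your verification is correct. The elementary Poisson brackets among $Q$, $\mathcal D$, $P$, $\mathcal C$ and the Casimir identity $QP-\mathcal D^{2}=\mathcal C^{2}$ do reduce parts (i)--(ii) to the short cancellations you indicate, and the Jacobian argument for (iii) is standard.

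Your treatment of (iv) via the single algebraic identity
\[
4\mathcal L_2^{2}+4\mathcal L_3^{2}=\bigl(\mathcal H_{\rm Zk}+4\gamma_2\mathcal L_1^{2}\bigr)^{2}+4\gamma_1^{2}\mathcal L_1^{2}
\]
is a genuinely efficient route: it packages the cubic Higgs algebra as the gradient of a Casimir-type function $F(\mathcal L_1,\mathcal L_2,\mathcal H_{\rm Zk})$, so that all three brackets in (\ref{zd}) fall out at once from the Leibniz rule. This is more conceptual than a direct bracket computation and, as you note, makes the mechanism behind the $(2N-1)$th-order polynomial generalization in Section~\ref{s4} transparent. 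It is also worth observing that your building blocks $Q,\mathcal D,P$ are exactly the two-particle realization $J_-^{(2)},J_3^{(2)},J_+^{(2)}$ of the Poisson $\mathfrak{sl}(2,\mathbb R)$-coalgebra introduced in Section~\ref{s22}, so your proof is fully in the spirit of the paper's subsequent development. The only cosmetic loose ends are the ones you flag yourself---exhibiting an explicit nonvanishing minor in (iii) and the density argument to cancel $\mathcal L_3$ in (iv)---and both are routine since all functions involved are polynomial.
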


All   the  results covered by Theorem~\ref{teor0} can be  expressed straightforwardly  in polar coordinates $(r,\phi)$ and conjugate momenta $(p_r,p_\phi)$, as it was already performed in~\cite{Fordy2018,PWY2017zernike},  by means of the usual canonical transformation given by
\be
\begin{array}{ll}
q_1=r\cos\phi ,   &\quad\displaystyle{   p_1=\cos\phi \, p_r -\frac{\sin \phi}{r}\, p_\phi     } , 
  \\[4pt]
q_2=r\sin\phi ,   &\quad\displaystyle{   p_2=\sin\phi \, p_r + \frac{\cos \phi}{r}\, p_\phi    } .
\end{array}
\label{ze}
\ee
In particular, in these variables the Hamiltonian $\mathcal H_{\rm Zk} $ (\ref{za}) and the angular momentum constant of the motion $ \mathcal C $ (\ref{zb}) 
turn out to be
\be
\mathcal H_{\rm Zk} =  p_r^2 + \frac{p_\phi^2 }{r^2} + \gamma_1 r p_r+ \gamma_2(r p_r)^2 , \qquad   \mathcal C = p_\phi ,
\label{zf}
\ee
showing directly the integrability of the system,  while   $ \mathcal I$ (or $ \mathcal I'$)  (\ref{zb})  is an additional integral (or hidden symmetry) determining its superintegrability.


\subsection{Interpretation on the   sphere and the hyperbolic space}
\label{s21}

We stress, as it  was already pointed  out in~\cite{PWY2017zernike}, that the relations (\ref{zd}) provide a cubic Higgs algebra~\cite{Higgs} (whenever   $\gamma_2\ne 0$), which is just the symmetry algebra of the integrals of the motion of the  well-known Higgs or isotropic curved oscillator on the 2D sphere $\mathbf S^2$ that has been extensively studied over the last few decades~\cite{Higgs,Leemon,Pogoa,RS,Kalnins1,Kalnins2,Nersessian1,Ranran,Santander6,BaHeMu13,MiPWJPa13,GonKas14AnnPhys,BaBlHeMu14,Kuruannals} (see also references therein). We  also  recall that a cubic  Higgs-type algebra   arises in   Kepler--Coulomb systems 
on   $\mathbf S^2$ and on the 2D hyperbolic space $\mathbf H^2$~\cite{Kepler2009}. These facts suggest a natural relationship between the previous interpretation of $\mathcal H_{\rm Zk} $ on   $\mathbf E^2$ and an alternative one as a  superintegrable Hamiltonian on a 2D curved space as it was   mentioned in~\cite{Fordy2018,PWY2017zernike}.

Let us consider the terms in $\mathcal H_{\rm Zk}$ depending quadratically in the momenta as the free Hamiltonian or kinetic energy of the system,  so that the associated metric can then be deduced.  From the expression (\ref{zf}) in polar variables  the underlying 2D non-Euclidean metric reads
\be
{\rm d}s^2= \frac{1}{1+\gamma_2 r^2}\, {\rm d}r^2+r^2 {\rm d}\phi^2 .
\label{zg}
\ee
Its Gaussian curvature $\kappa$ turns out to be constant and equal to  $-\gamma_2$~\cite{Fordy2018}. Hence, according to the sign of the curvature  parameter  $\kappa=-\gamma_2$, we find that the metric (\ref{zg}) simultaneously comprises   the flat Euclidean space $\mathbf E^2$ ($\kappa=\gamma_2=0$),  the sphere  $\mathbf S^2$ ($\kappa>0, \gamma_2<0$) and the hyperbolic space  $\mathbf H^2$  ($\kappa<0, \gamma_2>0$). 
 Since both initial (arbitrary) $\gamma_1$- and  $\gamma_2$-potentials are essential to deal with the proper Zernike system we shall assume  in this section that they are   different from zero, so that we shall deal with $\mathbf S^2$ and $\mathbf H^2$.

It should be noted that the polar radial coordinate  $r$ is no longer a geodesic distance in a curved space
with $\kappa\ne 0$ (which is our case now).  In order to perform an appropriate geometrical and dynamical interpretation of $\mathcal H_{\rm Zk} $ on $\mathbf S^2$ and $\mathbf H^2$, let us introduce the so-called geodesic radial coordinate~\cite{RS,BaHeMu13,conf}, here denoted by $\rho$, which is just the   distance along the geodesic joining the origin in the curved space and the particle,  keeping  unchanged the usual angular coordinate $\phi$. The relationship between $r$ and $\rho$ 
is given by  
\be
r= \Sk_\kk( \rho),\qquad \kk=-\gamma_2,
 \label{zh} 
\ee
where from now on we shall make use of the curvature-dependent cosine and sine functions defined by~\cite{RS,trigo,conf}
\begin{equation}
\Ck_{\kk}(x):=\left\{
\begin{array}{ll}
  \cos{\sqrt{\kk}\, x} &\quad  \kk>0 \\ 
\qquad 1  &\quad
  \kk=0 \\
\cosh{\sqrt{-\kk}\, x} &\quad   \kk<0 
\end{array}\right.   ,
 \qquad 
    \Sk{_\kk}(x) :=  \left\{
\begin{array}{ll}
  \frac{1}{\sqrt{\kk}} \sin{\sqrt{\kk}\, x} &\quad  \kk>0 \\ 
\qquad x  &\quad
  \kk=0 \\ 
\frac{1}{\sqrt{-\kk}} \sinh{\sqrt{-\kk}\, x} &\quad  \kk<0 
\end{array}\right.  .
\label{zi}
\end{equation}
The
$\kk$-tangent    is defined as
\be
\Tk_\kk(x)  := \frac{\Sk_\kk(x)}  { \Ck_\kk(x)} \, .
\label{zj}
\ee
These  $\kk$-dependent trigonometric functions coincide with  the  circular
and hyperbolic  ones for   $\kk=\pm 1$, while under the
contraction    (or flat limit)   $\kk=0$ they reduce    to the parabolic 
functions:  $\Ck_{0}(x)=1$ and 
$\Sk_{0}(x)=\Tk_{0}(x)=x$.   Under the change of variable (\ref{zh}),   the metric (\ref{zg}) is transformed in its usual form in geodesic polar coordinates $(\rho, \phi)$~\cite{RS,conf}: 
\be
{\rm d}s^2=  {\rm d}\rho ^2+ { \Sk^2_\kk( \rho)}\, {\rm d}\phi^2 .
\label{zk}
\ee
Note that its flat limit $\kk=\gamma_2=0$ leads to the usual metric on  $\mathbf E^2$ in polar coordinates, 
${\rm d}s^2=  {\rm d}r ^2+  r^2\, {\rm d}\phi^2 $, since  $\rho\equiv r$.
 
By taking into account the results presented in~\cite{Fordy2018,PWY2017zernike} in canonical polar    variables $\{r,\phi,p_r,p_\phi\}$ (\ref{ze})   together with the   relation (\ref{zh}), 
we  can  apply the results of  Theorem~\ref{teor0}   for $\mathcal H_{\rm Zk} $ on  $\mathbf E^2$  to  $\mathbf S^2$ and $\mathbf H^2$ in geodesic   polar variables. These are summarized as follows.


\begin{proposition} 
\label{prop0}
Let $\{ \rho,\phi, p_\rho ,p_\phi\}$ be a set of canonical  geodesic polar  variables with Poisson brackets $\{ q_\alpha ,p_\beta \} = \delta_{\alpha\beta}$ where $\alpha,\beta \in\{\rho, \phi\}$.  \\
(i) The classical superintegrable Zernike  Hamiltonian  (\ref{za}) can be expressed in these variables on $\mathbf S^2$ and $\mathbf H^2$, with $\kappa=-\gamma_2$, by applying the 
 canonical transformation given by
\be
\begin{array}{ll}
q_1=\Sk_\kk( \rho)\cos\phi ,   &\quad\displaystyle{   p_1=\cos\phi \, \frac{p_\rho} {\Ck_\kk(\rho) } -\frac{\sin \phi}{\Sk_\kk( \rho) }\, p_\phi     } , 
  \\[10pt]
q_2=\Sk_\kk( \rho) \sin\phi ,   &\quad\displaystyle{   p_2=\sin\phi  \, \frac{p_\rho} {\Ck_\kk(\rho) } +\frac{\cos \phi}{\Sk_\kk( \rho) }\, p_\phi     },
\end{array}
\label{zl}
\ee
leading to
\be
\mathcal H_{\rm Zk} =  p_\rho^2 + \frac{p_\phi^2 }{ \Sk^2_\kk( \rho)} + \gamma_1  \Tk_\kk( \rho)\,p_\rho .
\label{zm}
\ee
The domain for  the variables $(\rho, \phi) $ of $ \mathcal H_{\rm Zk}$  (\ref{zm})   is  given by  $\phi\in[ 0, 2\pi)$ and
\be
\mathbf S^2\ (\kk>0)\!: \ 0< \rho< \frac{ \pi }{2\sqrt{\kk} } \, ,\qquad \mathbf H^2\ (\kk<0)\!: \ 0< \rho< \infty .
\label{zn}
\ee
(ii) The following canonical transformation  
\be
\begin{array}{ll}
q_1=\Sk_\kk( \rho)\cos\phi ,   &\quad\displaystyle{   p_1=  \frac{\cos\phi} {\Ck_\kk(\rho) }\left(  p_\rho- \frac{\gamma_1}{2} \Tk_\kk( \rho) \right) -\frac{\sin \phi}{\Sk_\kk( \rho) }\, p_\phi     } , 
  \\[10pt]
q_2=\Sk_\kk( \rho) \sin\phi ,   &\quad\displaystyle{   p_2=    \frac{ \sin\phi} {\Ck_\kk(\rho)  }\left(  p_\rho- \frac{\gamma_1}{2} \Tk_\kk( \rho) \right)+\frac{\cos \phi}{\Sk_\kk( \rho) }\, p_\phi     },
\end{array}
\label{zl2}
\ee
gives rise to  the  Zernike  system  (\ref{za})   written as a natural Hamiltonian 
\be
\mathcal H_{\rm Zk} =\mathcal T_\kk+\mathcal U_\kk(\rho) ,\qquad \mathcal T_\kk =  p_\rho^2 + \frac{p_\phi^2 }{ \Sk^2_\kk( \rho)}  , \qquad \mathcal U_\kk(\rho)= -\frac{\gamma_1^2}{4}  \Tk^2_\kk( \rho)  ,
\label{zo}
\ee
where $\mathcal T_\kk$ is the kinetic energy on the curved space and $\mathcal U_\kk(\rho)$ is a central   potential. The latter   is just a central or Higgs oscillator,  with centre at the origin on the curved space, whenever the parameter $\gamma_1$ is a pure imaginary number.
\end{proposition}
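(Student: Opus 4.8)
The plan is to derive both parts from the polar expression $\mathcal H_{\rm Zk}=p_r^2+p_\phi^2/r^2+\gamma_1 r p_r+\gamma_2(rp_r)^2$ of Theorem~\ref{teor0}, eq.~(\ref{zf}), by composing it with two elementary canonical transformations: first the point transformation $r=\Sk_\kk(\rho)$ on configuration space, and then a momentum translation that completes the square in the radial momentum.

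For part~(i), I would note that $(\rho,\phi)\mapsto(r,\phi)=(\Sk_\kk(\rho),\phi)$ is a point transformation, hence lifts to the canonical map with $p_r=p_\rho/\Sk_\kk'(\rho)=p_\rho/\Ck_\kk(\rho)$ and $p_\phi$ unchanged, where $\Sk_\kk'=\Ck_\kk$ is read off directly from the definitions (\ref{zi}). Composing this with (\ref{ze}) reproduces precisely (\ref{zl}), which is checked by direct substitution. Inserting $rp_r=\Tk_\kk(\rho)p_\rho$ and $p_\phi^2/r^2=p_\phi^2/\Sk_\kk^2(\rho)$ into (\ref{zf}), and using $\gamma_2=-\kk$, the two terms quadratic in $p_\rho$ combine as
\be
\frac{p_\rho^2}{\Ck_\kk^2(\rho)}+\gamma_2\Tk_\kk^2(\rho)p_\rho^2=\frac{p_\rho^2}{\Ck_\kk^2(\rho)}\bigl(1-\kk\,\Sk_\kk^2(\rho)\bigr)=p_\rho^2,
\ee
the last step being the curved Pythagorean identity $\Ck_\kk^2(x)+\kk\,\Sk_\kk^2(x)=1$, again immediate from (\ref{zi}); this yields (\ref{zm}). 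The domain (\ref{zn}) then follows by demanding both $\Sk_\kk(\rho)>0$ (so the change of variable is a diffeomorphism and the centrifugal term is well defined) and $\Ck_\kk(\rho)>0$ (so that $p_r=p_\rho/\Ck_\kk(\rho)$ stays regular): for $\kk>0$ these force $0<\rho<\pi/(2\sqrt\kk)$, whereas for $\kk<0$ both hold for every $\rho>0$.

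For part~(ii), I would complete the square in (\ref{zm}),
\be
p_\rho^2+\gamma_1\Tk_\kk(\rho)p_\rho=\Bigl(p_\rho+\tfrac{\gamma_1}{2}\Tk_\kk(\rho)\Bigr)^2-\tfrac{\gamma_1^2}{4}\Tk_\kk^2(\rho),
\ee
and observe that the momentum shift $p_\rho\mapsto p_\rho-\tfrac{\gamma_1}{2}\Tk_\kk(\rho)$ (with $\rho,\phi,p_\phi$ left unchanged) is canonical, since translating a momentum by a function of its conjugate coordinate preserves all brackets, as one sees at once from $\{\rho,\,p_\rho-\tfrac{\gamma_1}{2}\Tk_\kk(\rho)\}=1$; it is generated by an explicit type-2 function because $\Tk_\kk$ has an elementary antiderivative. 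Substituting $p_\rho\to p_\rho-\tfrac{\gamma_1}{2}\Tk_\kk(\rho)$ into (\ref{zm}) produces the natural Hamiltonian (\ref{zo}) with $\mathcal T_\kk=p_\rho^2+p_\phi^2/\Sk_\kk^2(\rho)$ and $\mathcal U_\kk(\rho)=-\tfrac{\gamma_1^2}{4}\Tk_\kk^2(\rho)$, and performing the same replacement in (\ref{zl}) produces exactly (\ref{zl2}). Finally, when $\gamma_1$ is pure imaginary, writing $\gamma_1^2=-\omega^2$ gives $\mathcal U_\kk(\rho)=\tfrac{\omega^2}{4}\Tk_\kk^2(\rho)$, which is the standard central (Higgs) curved-oscillator potential, centred at the origin, of~\cite{Higgs,RS}.

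The main obstacle here is bookkeeping rather than any genuine difficulty: one must track the composition of the two canonical transformations carefully so that (\ref{zl}) and (\ref{zl2}) come out with the stated signs and the correct placement of $\Ck_\kk$, $\Sk_\kk$ and $\Tk_\kk$, and one must invoke exactly the right $\kk$-trigonometric identities — $\Sk_\kk'=\Ck_\kk$ and $\Ck_\kk^2+\kk\,\Sk_\kk^2=1$ — since these are what make the two radial kinetic terms collapse to the free term $p_\rho^2$. The domain statement (\ref{zn}) should likewise be argued explicitly rather than merely asserted.
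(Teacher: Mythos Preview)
Your proof is correct and follows exactly the approach the paper has in mind: the paper's own proof is the single line ``It is immediate from Theorem~\ref{teor0}, the canonical transformations \eqref{zl} and \eqref{zl2} together with the definitions \eqref{zi} and \eqref{zj}'', and your argument supplies precisely the details behind that sentence, including the composition with the polar transformation~(\ref{ze}), the $\kk$-trigonometric identity that collapses the radial kinetic terms, and the momentum shift $p_\rho\to p_\rho-\tfrac{\gamma_1}{2}\Tk_\kk(\rho)$, which the paper records separately as~(\ref{zzo}).
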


\begin{proof}
It is immediate from Theorem~\ref{teor0}, the canonical transformations \eqref{zl} and  (\ref{zl2}) together with  the definitions \eqref{zi} and \eqref{zj}.
\end{proof}


Observe that the relation between the two canonical transformations (\ref{zl}) and (\ref{zl2})   simply corresponds to the substitution in (\ref{zm}) given by
\be
 p_\rho =  \tilde p_\rho- \frac{\gamma_1}{2} \Tk_\kk( \rho),
 \label{zzo}
 \ee
and next dropping the tilde in $\tilde p_\rho$ while  keeping $\rho$ as the common conjugate coordinate~\cite{Fordy2018} so obtaining  (\ref{zo}).

 The isometries of the metric (\ref{zk}) associated with the free Hamiltonian  $ \mathcal T_\kk$ (\ref{zo})  turn out to be~\cite{BaHeMu13}
 \be
J_{01}=  \cos\phi\,p_\rho-\frac{\sin\phi}{ \Tk_\kk(\rho) }\, p_\phi,\qquad 
J_{02}= \sin\phi\,p_\rho+\frac{\cos\phi}{\Tk_\kk(\rho)}\, p_\phi ,\qquad J_{12}=p_\phi .
    \label{zp}
  \ee
These functions fulfil the Poisson brackets given by
 \be
  \{J_{12},J_{01}\}=J_{02},\qquad \{J_{12},J_{02}\}=-J_{01},\qquad \{J_{01},J_{02}\}=\kk J_{12}  ,
   \label{zq}
 \ee
thus closing  a Poisson--Lie   algebra isomorphic either to $\mathfrak{so}(3)$ for $\kk>0$ or to $\mathfrak{so}(2,1)$ for $\kk<0$ in agreement with~\cite{Fordy2018}.
Note also that the kinetic term $\mathcal T_\kk$ (\ref{zo}) is just the Casimir of the  Poisson--Lie   algebra (\ref{zq}):
 \be
 \mathcal T_\kk=J_{01}^2+J_{02}^2+\kk J_{12}^2 .
 \label{zr}
 \ee

As we have mentioned in Proposition~\ref{prop0}, the superintegrable potential $\mathcal U_\kk(\rho)$ (\ref{zo}) corresponds to the isotropic 1\,:\,1   or Higgs oscillator on $\mathbf S^2$ and $\mathbf H^2$ when $\gamma_1$ is purely imaginary.  If we set $\gamma_1=2 {\rm i} \omega$ with real parameter $\omega$, then  $\mathcal U_\kk(\rho)=\omega^2\Tk^2_\kk( \rho) $ with  $\omega$ behaving as the frequency of the curved oscillator, that is, 
$\mathcal U_{+1}(\rho)=\omega^2\tan^2 \rho  $ and $\mathcal U_{-1}(\rho)=\omega^2\tanh^2 \rho  $. In this case, the  Zernike  system   has bounded trajectories which are all periodic   and   given by  ellipses in accordance with~\cite{PWY2017zernike}. 
For some trajectories of the Higgs oscillator on $\mathbf S^2$ and $\mathbf H^2$ see also~\cite{BaHeMu13,Kuruannals}.

From the results of Theorem~\ref{teor0} and Proposition~\ref{prop0} it is straightforward to express the Zernike Hamiltonian together with its associated superintegrability properties in other relevant sets of canonical variables such as geodesic parallel and projective (Beltrami and Poincar\'e) ones~\cite{RS,Santander6,BaHeMu13,BaBlHeMu14,Kuruannals}.


\subsection{Poisson $\mathfrak{sl}(2,\mathbb R)$-coalgebra symmetry}
\label{s22}

Let us consider the algebra $\mathfrak{sl}(2,\mathbb R)=\spn\{J_3,J_+,J_- \}$ expressed as a Poisson--Lie   algebra with defining Poisson brackets and 
Casimir  $ {C}$ given by
\begin{equation}
 \{J_3,J_+\}=2 J_+    , \qquad  
\{J_3,J_-\}=-2 J_- ,\qquad   
\{J_-,J_+\}=4 J_3     ,
\label{zs}
\end{equation}
\begin{equation} 
 {C}=  J_- J_+ -J_3^2  . 
\label{zt}
\end{equation} 
Then, as  any Poisson--Lie   algebra,   $\mathfrak{sl}(2,\mathbb R)$ can be endowed with a Poisson coalgebra structure~\cite{BR98},  $(\mathfrak{sl}(2,\mathbb R),\Delta)$, by considering the primitive or non-deformed coproduct map  $\Delta$ given by
\be
\Delta: \mathfrak{sl}(2,\mathbb R)\to \mathfrak{sl}(2,\mathbb R) \otimes \mathfrak{sl}(2,\mathbb R) ,\qquad \Delta(J_l)=  J_l \otimes 1+ 1\otimes J_l  ,\quad l\in\{3,+,-\},
\label{zu}
\ee
which is a homomorphism of Poisson algebras from $(\mathfrak{sl}(2,\mathbb R), \{ \cdot , \cdot \})$ and $(\mathfrak{sl}(2,\mathbb R) \otimes \mathfrak{sl}(2,\mathbb R), \{ \cdot, \cdot \}^{(2)})$, where $\{ \cdot , \cdot \}$ is given by \eqref{zs} and $\{ \cdot, \cdot \}^{(2)}$ is the direct product of two such Poisson structures. Notice that the (trivial) counit and antipode can also be defined giving rise to a non-deformed Hopf algebra structure~\cite{ChariPressley1994}.

A  one-particle symplectic realization of (\ref{zs}) reads 
\begin{equation}
 J_-^{(1)}=q_1^2 ,    \qquad    
J_+^{(1)}=  p_1^2+\frac {\otra_1}{ q_1^2}  ,   \qquad 
J_3^{(1)}= q_1 p_1     ,
\label{zv1}
\end{equation}
where $\otra_1$ is a real parameter that labels the representation through the Casimir (\ref{zt}):  
\be
 {C}^{(1)}=  J_-^{(1)} J_+^{(1)} -\left(J_3^{(1)}\right)^2   =  \otra_1.
\ee 
From (\ref{zv1}),  the coproduct (\ref{zu}) provides the following two-particle symplectic realization of (\ref{zs}):
\be
 J_-^{(2)}=q_1^2+  q_2^2,    \qquad    
J_+^{(2)}=  p_1^2+\frac {\otra_1}{ q_1^2} +p_2^2+\frac{\otra_2}{ q_2^2}  ,   \qquad 
J_3^{(2)}= q_1 p_1    + q_2 p_2   .
\label{zv}
\ee
And the two-particle realization of the Casimir (\ref{zt}) turns out to be:
\be
 {C}^{(2)}=  J_-^{(2)} J_+^{(2)} -\left(J_3^{(2)}\right)^2   = ({q_1}{p_2} -
{q_2}{p_1})^2 + \left(
\otra_1\frac{q_2^2}{q_1^2}+\otra_2\frac{q_1^2}{q_2^2}\right)+ \otra_1+\otra_2 .
\label{zw}
\ee 
By construction~\cite{BR98}, $ {C}^{(2)}$ Poisson-commutes with the three functions  (\ref{zv}) so that any smooth function $ \mathcal H$ defined on them becomes, at least,  a 2D {\em integrable} Hamiltonian, 
\be
\mathcal H^{(2)} = \mathcal H \left(J_3^{(2)}, J_+^{(2)}, J_-^{(2)} \right),
\label{zx}
\ee 
always sharing the constant of the motion given by $ {C}^{(2)}$. Geometrically, the 3D Poisson manifold is foliated by 2D symplectic leaves defined by the level sets of $ {C}^{(2)}$. We recall that, by taking into account the coassociativity property of the coproduct, this result from 2D Poisson $\mathfrak{sl}(2,\mathbb R)$-coalgebra symmetry    can be generalized to arbitrary dimension $d$ providing $(2d-3)$ functionally independent `universal' constants of the motion~\cite{Ballesteros2007,BBHMR2009}; for the corresponding Racah algebra we refer to~\cite{Latini2019,Latini2021}  and references therein. Hence such Hamiltonians are called quasi-maximally superintegrable since only {\em one} additional constant of the motion is left to ensure maximal superintegrability.

The application of the above results to the  classical Zernike  system   is now straightforward. Let us set the parameters $\lambda_1=\lambda_2=0$.
Then the Zernike  Hamiltonian (\ref{za}) is shown to be endowed with an $\mathfrak{sl}(2,\mathbb R)$-coalgebra symmetry by considering the following particular expression for  $\mathcal H^{(2)}$:
\be
\mathcal H_{\rm Zk} \equiv \mathcal H^{(2)} = J_+^{(2)} +\gamma_1 J_3^{(2)}+\gamma_2 \left(J_3^{(2)}\right)^2 .
\label{zy}
\ee
And, obviously, $ {C}^{(2)}$ (\ref{zw}) reduces  to the square of the angular momentum constant of the motion $ \mathcal C $~(\ref{zb}). The superintegrability property arises by obtaining an additional functionally independent integral $ \mathcal I$ (or $ \mathcal I'$)  (\ref{zb}) with respect to 
$\mathcal H^{(2)}$ and $ {C}^{(2)}$.

The crucial point now is that   all the above results naturally suggest   to consider the following generalization of the 
 Zernike  Hamiltonian (\ref{zy}):
 \be
\mathcal H_N =  J_+^{(2)} + \sum_{n=1}^N \gamma_n   \left(J_3^{(2)}\right)^n = \mathbf{p}^2 + \sum_{n=1}^N \gamma_n (\mathbf{q}  \boldsymbol{\cdot} \mathbf{p})^n ,
\label{zz}
\ee
where $\gamma_n$ are arbitrary parameters and  hereafter we denote  $\mathbf{p} = p_1^2 + p_2^2$ and  $\mathbf{q}  \boldsymbol{\cdot} \mathbf{p} = q_1 p_1 + q_2 p_2$. Clearly, $\mathcal H_N $ is an integrable Hamiltonian keeping  the  same constant of the motion $ \mathcal C $~(\ref{zb}) and the 
Zernike system is the particular case $\mathcal H_{\rm Zk} \equiv  \mathcal H_2$. Therefore, the open problem is to obtain the generalization of the additional integral  $ \mathcal I$ (or $ \mathcal I'$)  (\ref{zb}) thus ensuring that $\mathcal H_N$   actually determines a superintegrable system.
In the next section, we solve    this problem  presenting  the additional integrals, say   $ \mathcal I_N$ and  $ \mathcal I'_N$, for $\mathcal H_N$ which turn out to be of higher-order in the momenta.


\sect{A new class of superintegrable  momentum-dependent Hamiltonians}
\label{s3}

Our aim now is to prove that the Hamiltonian  $\mathcal H_N $ (\ref{zz})  is superintegrable for any value of the arbitrary parameters $\gamma_n$ by explicitly finding an additional constant of the motion. Hence, when both $\gamma_1$ and $\gamma_2$ are different from zero, 
$\mathcal H_N $ can be regarded as  a generalization of the classical Zernike system through superintegrable perturbations determined by the terms $\gamma_n$ with $n>2$. Firstly, we shall consider the construction of  $\mathcal H_N $ on $\mathbf E^2$  and, secondly, we shall interpret our results on $\mathbf S^2$  and $\mathbf H^2$   following Section~\ref{s21}.


\subsection{Superintegrable systems on the Euclidean plane}
\label{s31}

Let us start by introducing a set of four types of homogeneous polynomials   depending on the  two Cartesian variables $(q_1,q_2) \in \mathbb R^2$ on $\mathbf E^2$   given by 
\be
Q^{(n-j,j)}_{ab} := Q^{(n-j,j)}_{ab} (q_1,q_2), \qquad  n,j \in \mathbb N,\qquad  0\le j \le n,\qquad a,b \in \{e,o\},
\ee
where $e$ stands for \emph{even} and $o$   for \emph{odd}  according to  the parity of the integers $n$ and $j$. 
These polynomials  are 
of degree $n=(n-j) +j$ and read
\begin{equation}
\begin{split}
\label{eq:Q_ab}
Q^{(n-j,j)}_{ee} &:= (-1)^{j/2} \sum_{k=0}^{j/2} (-1)^k  \binom{n}{2k} \bigg[ (-1)^{\frac{n}{2}+1}q_1^{n-2k} q_2^{2k} + q_1^{2k} q_2^{n-2k} \bigg] , \\
Q^{(n-j,j)}_{eo} &:= (-1)^{(j-1)/2} \sum_{k=0}^{(j-1)/2} (-1)^k \binom{n}{2k+1}  \bigg[ (-1)^{\frac{n}{2}} q_1^{n-(2k+1)} q_2^{2k+1} 
  +   q_1^{2k+1} q_2^{n-(2k+1)} \bigg]  ,\\
Q^{(n-j,j)}_{oe} &:= (-1)^{j/2} \left[ \sum_{k=0}^{j/2-1} (-1)^k (-1)^{\frac{n+1}{2}} \binom{n}{2k+1} q_1^{n-(2k+1)} q_2^{2k+1}  + \sum_{k=0}^{j/2} (-1)^k \binom{n}{2k} q_1^{2k} q_2^{n-2k}
\right],
\\
Q^{(n-j,j)}_{oo} &:= (-1)^{(j-1)/2} \sum_{k=0}^{(j-1)/2} (-1)^k \bigg[ (-1)^{\frac{n+1}{2}} \binom{n}{2k} q_1^{n-2k} q_2^{2k}  + \binom{n}{2k+1} q_1^{2k+1} q_2^{n-(2k+1)} \bigg]  .\\
\end{split}
\end{equation}
Note that for some values of $j$,  some of the sums above could be empty (in particular, this may happen with $Q^{(n-j,j)}_{oe}$). 
Next   we define a set of polynomials $Q^{(n-j,j)}$ which encompasses the above four types   $Q^{(n-j,j)}_{ab}$. Let us consider the function $\Theta : \mathbb N \to \{0,1\}$,  
\begin{equation}
\Theta (m) := 
\begin{cases}
1 &\text{if } m \text{ is even} \\
0 &\text{if } m \text{ is odd} \\
\end{cases} .
\end{equation}
Then  the polynomials $Q^{(n-j,j)}$ are defined by 
\begin{equation}
\begin{split}
\label{eq:Q}
Q^{(n-j,j)} :=Q^{(n-j,j)}  (q_1,q_2)&= \Theta(n) \Theta(j) Q^{(n-j,j)}_{ee} + \Theta(n) \big(1-\Theta(j) \big) Q^{(n-j,j)}_{eo} \\
&\quad +\big(1-\Theta(n) \big) \Theta(j) Q^{(n-j,j)}_{oe} + \big(1-\Theta(n) \big) \big(1-\Theta(j) \big) Q^{(n-j,j)}_{oo} ,
\end{split}
\end{equation}
where $Q^{(n-j,j)}_{ab}$ are given by \eqref{eq:Q_ab}. Thus, the degree of the polynomial $Q^{(n-j,j)}$ is again $n$. With the previous definitions, we have all the ingredients to state and prove the main result of this paper. 
\smallskip

\begin{theorem}
\label{teor1}
Let $\{ q_1,q_2,p_1,p_2\}$ be a set of canonical Cartesian variables such that $\{ q_i,p_j\} = \delta_{ij}$. The Hamiltonian   (\ref{zz})  on the Euclidean plane, namely, 
\begin{equation}
\mathcal H_N = \mathbf{p}^2 + \sum_{n=1}^N \gamma_n (\mathbf{q}  \boldsymbol{\cdot} \mathbf{p})^n ,
\label{hamN}
\end{equation}
such that $\gamma_n$ are arbitrary parameters, is   superintegrable for all $N \in \mathbb N$. The two integrals of the motion are the usual angular momentum 
 $\mathcal C = q_1 p_2 - q_2 p_1 $  (\ref{zb}), together with the following  $N$th-order in the momenta function  
\begin{equation}
\label{eq:I}
\mathcal I_N = p_2^2 + \sum_{n=1}^N \gamma_{n} \sum_{j=0}^{\varphi(n)} p_2^{n-j} p_1^j \, Q^{(n-j,j)}(q_1,q_2),
\end{equation}
where $Q^{(n-j,j)}$ is given by \eqref{eq:Q} through \eqref{eq:Q_ab} and $\varphi(n)$ denotes the greatest even integer less than $n$, that is,
\begin{equation}
\varphi (n) := 
\begin{cases}
n-2 &\text{if } n \text{ is even} \\
n-1 &\text{if } n \text{ is odd} \\
\end{cases} .
\label{a1}
\end{equation}
 The set   $\{\mathcal H_N, \mathcal I_N, \mathcal C\}$ is formed by three functionally independent functions.
\end{theorem}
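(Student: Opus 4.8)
The plan is to dispatch the three claims --- that $\mathcal C$ is an integral, that $\mathcal I_N$ is an integral, and that $\{\mathcal H_N,\mathcal I_N,\mathcal C\}$ are functionally independent --- in turn, with essentially all the work concentrated in the second. Throughout I abbreviate $D:=q_1p_1+q_2p_2$, so that $\mathcal H_N=p_1^2+p_2^2+\sum_{n=1}^N\gamma_n D^n$. That $\{\mathcal C,\mathcal H_N\}=0$ is immediate: both $p_1^2+p_2^2$ and $D$ are invariant under the rotation generated by $\mathcal C$, i.e.\ $\{\mathcal C,p_1^2+p_2^2\}=\{\mathcal C,D\}=0$ by a one-line check, whence $\{\mathcal C,\mathcal H_N\}=0$. (This is also the Poisson $\mathfrak{sl}(2,\mathbb R)$-coalgebra symmetry of Section~\ref{s22}, $\mathcal C^2$ being the two-site Casimir.)

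For $\mathcal I_N$, I would write $\mathcal I_N=p_2^2+\sum_{n=1}^N\gamma_n\,\mathcal K_n$ with $\mathcal K_n:=\sum_{j=0}^{\varphi(n)}p_2^{\,n-j}p_1^{\,j}\,Q^{(n-j,j)}(q_1,q_2)$, and reduce $\{\mathcal H_N,\mathcal I_N\}=0$ to a family of polynomial identities using two elementary observations. First, for any $f$ homogeneous of degree $a$ in the coordinates and $b$ in the momenta one has $\{D,f\}=(b-a)f$; since every summand of $\mathcal K_n$ has coordinate-degree $n$ and momentum-degree $n$, this forces $\{D,\mathcal K_n\}=0$, hence $\{D^m,\mathcal K_n\}=mD^{m-1}\{D,\mathcal K_n\}=0$ for all $m,n$. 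Second, $\{D,p_2^2\}=2p_2^2$, whereas $\{p_1^2+p_2^2,g\}=-2\,\mathcal D g$ for any function $g$, where $\mathcal D:=p_1\partial_{q_1}+p_2\partial_{q_2}$ is the directional derivative along $(p_1,p_2)$ acting on functions of $(q_1,q_2)$. Together with $\{p_1^2+p_2^2,p_2^2\}=0$, these collapse the bracket to
\[
\{\mathcal H_N,\mathcal I_N\}=\sum_{n=1}^N\gamma_n\bigl(2n\,D^{n-1}p_2^2-2\,\mathcal D\mathcal K_n\bigr),
\]
so the assertion holds for all values of the $\gamma_n$ precisely when
\[
\mathcal D\,\mathcal K_n=n\,D^{n-1}\,p_2^2,\qquad n=1,2,\dots,
\]
an identity between polynomials homogeneous of degree $n-1$ in the coordinates and $n+1$ in the momenta.

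Proving this identity is the crux, and the step I expect to be the main obstacle. Comparing on both sides the coefficient of $p_2^{\,n+1-b}p_1^{\,b}$, for $0\le b\le n+1$ and with the convention $Q^{(n-j,j)}:=0$ whenever $j<0$ or $j>\varphi(n)$, turns $\mathcal D\mathcal K_n=nD^{n-1}p_2^2$ into the triangular recursion
\[
\partial_{q_2}Q^{(n-b,b)}=n\binom{n-1}{b}q_1^{\,b}q_2^{\,n-1-b}-\partial_{q_1}Q^{(n-b+1,\,b-1)},
\]
the binomial term to be omitted when $b>n-1$. Starting from $Q^{(n,0)}$, where the last term drops out, and working upwards in $b$, one checks this directly against the closed forms \eqref{eq:Q_ab}--\eqref{eq:Q}: the four cases $ab\in\{ee,eo,oe,oo\}$ there are exactly what is needed to carry along the parities of $n$ and $b$ appearing in the binomial coefficients, and the cut-off at $b=\varphi(n)$ encodes the vanishing of the recursion's right-hand side past that index. (Equivalently one can simply match all monomial coefficients in $(q_1,q_2,p_1,p_2)$ on the two sides of $\mathcal D\mathcal K_n=nD^{n-1}p_2^2$, again reducing to standard binomial-coefficient identities.) The difficulty is purely one of bookkeeping: keeping the parity casework and the truncation index straight across all $N$.

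Finally, for functional independence it is enough to exhibit one phase-space point at which $d\mathcal H_N$, $d\mathcal I_N$, $d\mathcal C$ are linearly independent. At $q_1=q_2=0$ every $Q^{(n-j,j)}$ vanishes, and so does every power $D^n$ with $n\ge2$, leaving $d\mathcal H_N=\gamma_1 p_1\,dq_1+\gamma_1 p_2\,dq_2+2p_1\,dp_1+2p_2\,dp_2$, $d\mathcal I_N=\gamma_1 p_2\,dq_2+2p_2\,dp_2$ and $d\mathcal C=p_2\,dq_1-p_1\,dq_2$. Inspecting successively the coefficients of $dp_1$, $dp_2$ and $dq_1$ in a hypothetical vanishing linear combination forces all three scalars to be zero whenever $p_1p_2\neq0$, so the covectors are independent there; hence no functional relation among $\mathcal H_N$, $\mathcal I_N$, $\mathcal C$ can hold, and $\mathcal H_N$ is superintegrable.
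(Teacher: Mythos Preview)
Your proposal is correct and follows essentially the same route as the paper: the vanishing of the ``cross'' brackets $\{D^m,\mathcal K_n\}$ via homogeneity (your scaling identity $\{D,f\}=(b-a)f$ is exactly the paper's appeal to Euler's theorem), followed by the reduction of $\{\mathcal H_N,\mathcal I_N\}=0$ to the single family of polynomial identities $\mathcal D\mathcal K_n=nD^{n-1}p_2^2$, which after matching powers of $p_1,p_2$ becomes the paper's recursion on the $Q^{(n-j,j)}$ (your relation with $b=a+1$ coincides with the paper's equation for $\partial_{q_1}Q^{(N-a,a)}+\partial_{q_2}Q^{(N-a-1,a+1)}$, using $(N-b)\binom{N}{b}=N\binom{N-1}{b}$). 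The only cosmetic differences are that you collapse the bracket in one pass rather than via the paper's inductive splitting $\mathcal H_N=\mathcal H_{N-1}+\gamma_N\mathcal G_N$, and you argue functional independence by evaluating differentials at $q_1=q_2=0$ instead of the paper's ``set all $\gamma_n=0$'' remark; both are valid and neither changes the substance of the argument.
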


\begin{proof}
The functional independence among $\mathcal H_N$,  $\mathcal I_N$ and $ \mathcal C$ can be seen   from their explicit expressions; in fact, one can set all the parameters $\gamma_n\equiv 0$ recovering the superintegrability of the geodesic motion on the Euclidean plane.

In order to prove that $\mathcal I_N$ is an integral of motion, let us  denote 
\be
\mathcal G_N = (\mathbf{q} \boldsymbol{\cdot}  \mathbf{p})^N ,\qquad \mathcal J_N = \sum_{j=0}^{\varphi(N)} p_2^{N-j} p_1^j \,Q^{(N-j,j)},
\label{b1}
\ee
 and   we have  that
\begin{equation}
\begin{split}
\mathcal H_N &= \mathcal H_{N-1} + \gamma_N \mathcal G_N , \qquad
\mathcal I_N = \mathcal I_{N-1} + \gamma_N \mathcal J_N . \\
\end{split}
\label{xa}
\end{equation}
We also write the free motion as  $\mathcal H_0 = p_1^2 + p_2^2$ and $\mathcal I_0 = p_2^2$, and thus
\begin{equation}
\begin{split}
\mathcal H_N &= \mathcal H_{0} + \sum_{n=1}^N \gamma_n \mathcal G_n , \qquad
\mathcal I_N = \mathcal I_{0} + \sum_{n=1}^N \gamma_n \mathcal J_n . \\
\end{split}
\label{xb}
\end{equation}
From (\ref{xa}) and by bilinearity of the Poisson bracket we find that
\begin{equation}
\begin{split}
\label{eq:poisHNIN}
\{ \mathcal H_N, \mathcal I_N \} &= \{ \mathcal H_{N-1} + \gamma_N \mathcal G_N, \mathcal I_{N-1} + \gamma_N \mathcal J_N \}  \\
&=\{ \mathcal H_{N-1}, \mathcal I_{N-1}  \} + \gamma_N \left( \{ \mathcal G_{N}, \mathcal I_{N-1}  \}  + \{ \mathcal H_{N-1}, \mathcal J_{N}  \}\right) + \gamma_N^2 \{ \mathcal G_N, \mathcal J_N \}  .
\end{split}
\end{equation}
Since the only dependence on $\gamma_N$ is explicit on the previous formula, both terms $\{ \mathcal G_{N}, \mathcal I_{N-1}  \}  + \{ \mathcal H_{N-1}, \mathcal J_{N} \} $ and $\{ \mathcal G_N, \mathcal J_N \} $ must vanish. Using (\ref{xb}) and bilinearity $(N-1)$ times we obtain that
\begin{equation}
\begin{split}
\{ \mathcal G_{N}, \mathcal I_{N-1}  \} &= \{ \mathcal G_{N}, \mathcal I_{0}  \} + \sum_{n=1}^{N-1} \gamma_{n} \{ \mathcal G_{N}, \mathcal J_{n}  \} , \\
\{ \mathcal H_{N-1}, \mathcal J_{N}  \} &= \{ \mathcal H_{0}, \mathcal J_{N}  \} + \sum_{n=1}^{N-1} \gamma_{n} \{ \mathcal G_{n}, \mathcal J_{N}  \} .
\end{split}
\end{equation} 
Therefore, it remains to prove that
\begin{enumerate}

\item[1. ] $\{ \mathcal G_M, \mathcal J_N \} = 0$ for all $M,N \in \mathbb N$, and

\item[2. ] $\{ \mathcal G_{N}, \mathcal I_{0}  \} + \{ \mathcal H_{0}, \mathcal J_{N}  \} = 0$ for all $N \in \mathbb N$.

\end{enumerate}
Hence, if both of the above  two   statements are true, then from \eqref{eq:poisHNIN} and applying bilinearity $(N-1)$ times we find that
\begin{equation}
\{ \mathcal H_N, \mathcal I_N \} = \{ \mathcal H_{N-1}, \mathcal I_{N-1} \} = \cdots = \{ \mathcal H_0, \mathcal I_0 \} =\left \{ p_1^2 + p_2^2, p_2^2\right \} = 0 ,
\end{equation}
for all $N \in \mathbb N$.

Now we prove that $\{ \mathcal G_M, \mathcal J_N \} = 0$ for all $M,N \in \mathbb N$. A simple computation shows that
\begin{equation}
\begin{split}
\{ \mathcal G_M, \mathcal J_N \} &= \sum_{j=0}^{\varphi(N)} \left \{ (\mathbf{q} \boldsymbol{\cdot}  \mathbf{p})^M, \, p_2^{N-j} p_1^j \, Q^{(N-j,j)} \right \} \\
&= M (\mathbf{q} \boldsymbol{\cdot}  \mathbf{p})^{M-1} \sum_{j=0}^{\varphi(N)} p_2^{N-j} p_1^j \left( N Q^{(N-j,j)} - \mathbf{q} \boldsymbol{\cdot}  \nabla Q^{(N-j,j)} \right) = 0 ,
\end{split}
\end{equation}
where the last identity follows directly from Euler's homogeneous function theorem by recalling that $Q^{(N-j,j)}$ (\ref{eq:Q}) is a homogeneous polynomial of degree $N$, thus satisfying
\begin{equation}
\mathbf{q} \boldsymbol{\cdot}  \nabla Q^{(N-j,j)} = N Q^{(N-j,j)} ,\qquad \forall N,j \in \mathbb N .
\end{equation}
 
To prove that $\{ \mathcal G_{N}, \mathcal I_{0}  \} + \{ \mathcal H_{0}, \mathcal J_{N}  \} = 0$ for all $N \in \mathbb N$, we first compute
\begin{equation}
\label{eq:GNI0}
\{ \mathcal G_{N}, \mathcal I_{0}  \} =\left \{ (\mathbf{q}  \boldsymbol{\cdot}  \mathbf{p})^N, p_2^2 \right \} = 2 \sum_{j=0}^{N-1} (N-j) \binom{N}{j} p_2^{N-j+1} p_1^j \,q_2^{N-j-1} q_1^j ,
\end{equation}
and then
\begin{equation}
\label{eq:H0JN}
\begin{split}
\{ \mathcal H_{0}, \mathcal J_{N}  \} &= \sum_{j=0}^{\varphi(N)}  \left \{ p_1^2 + p_2^2, \, p_2^{N-j} p_1^j Q^{(N-j,j)} \right\} =   \sum_{j=0}^{\varphi(N)} p_2^{N-j} p_1^j \left \{ p_1^2 + p_2^2, Q^{(N-j,j)} \right \} \\
&= -2 \sum_{j=0}^{\varphi(N)} \left( p_2^{N-j} p_1^{j+1}\, \frac{\partial Q^{(N-j,j)}}{ \partial q_1} + p_2^{N-j+1} p_1^{j} \, \frac{\partial Q^{(N-j,j)}}{\partial q_2} \right) .
\end{split}
\end{equation}
Equating the coefficients of $p_2^{N-a} p_1^{a+1}$ in \eqref{eq:GNI0} and \eqref{eq:H0JN} we arrive at the equation
\begin{equation}
\label{xd}
\frac{\partial Q^{(N-a,a)}}{\partial q_1} + \frac{\partial Q^{(N-a-1,a+1)}}{\partial q_2} = (N-a-1) \binom{N}{a+1} q_2^{N-a-2} q_1^{a+1} .
\end{equation}
Since this computation involves the explicit expressions of $Q^{(N-j,j)}$ (\ref{eq:Q}), for the sake of brevity we only present the case when $N$ and $a$ are {\em even} numbers  ({the proof for the remaining cases is similar}). Hence $Q^{(N-a,a)}=Q^{(N-a,a)}_{ee}$ and  $Q^{(N-a-1,a+1)}=Q^{(N-a-1,a+1)}_{eo}$ given in (\ref{eq:Q_ab}), so that   we have
\begin{equation}
\label{xc}
\begin{split}
&\!\!\!\! \!\!\!\! \frac{\partial Q^{(N-a,a)}_{ee}}{\partial q_1} + \frac{\partial Q^{(N-a-1,a+1)}_{eo}}{\partial q_2}  \\
&= (-1)^{a/2} \sum_{k=0}^{a/2} (-1)^k \left[(-1)^{\frac{N}{2}}\bigg(-(N-2k) \binom{N}{2k} +(2k+1) \binom{N}{2k+1} \bigg) q_1^{N- 2k-1} q_2^{2k} \right. \\
  &\qquad\qquad\qquad+2k \left. \binom{N}{2k} q_1^{2k-1} q_2^{N-2k} + (N- 2k-1) \binom{N}{2k+1} q_1^{2k+1} q_2^{N-2k-2} \right] \\
&=  (-1)^{a/2} \sum_{k=0}^{a/2} (-1)^k \left[ 2k \binom{N}{2k} q_1^{2k-1} q_2^{N-2k} + (N- 2k-1) \binom{N}{2k+1} q_1^{2k+1} q_2^{N-2k-2} \right] \\
&=(N-a-1) \binom{N}{a+1} q_1^{a+1} q_2^{N-a-2} +  (-1)^{a/2} \left[ \sum_{k=0}^{a/2} (-1)^k \, 2k \binom{N}{2k} q_1^{2k-1} q_2^{N-2k} \right. 
 \\
  &\qquad\qquad\qquad + \left.  \sum_{k=0}^{a/2-1} (-1)^k (N-2k-1 ) \binom{N}{2k+1} q_1^{2k+1} q_2^{N-2k-2} \right]   ,
\end{split}
\end{equation}
where we have used that 
\be
 \binom{N}{b}(N-b) = \binom{N}{b+1} (b+1).
\label{xl}
\ee
 Therefore, if we prove that the last expression in (\ref{xc})  between square brackets vanishes we would have finished since the relation (\ref{xd}) would be   fulfilled. We can rewrite such  expression as
\begin{equation}
\begin{split}
\sum_{k=0}^{a/2} &(-1)^k \, 2k \binom{N}{2k} q_1^{2k-1} q_2^{N-2k} + \sum_{k=1}^{a/2} (-1)^{k-1} (N- 2k+1) \binom{N}{2k-1} q_1^{2k-1} q_2^{N-2k} \\
&= \sum_{k=1}^{a/2} (-1)^k \left [ 2k \binom{N}{2k} - (N- 2k+1) \binom{N}{2k-1} \right] q_1^{2k-1} q_2^{N-2k} = 0 ,
\end{split}
\end{equation}
where we have used again the   property (\ref{xl}).   Consequently,  we have proved that $\mathcal I_N$ (\ref{eq:I}) is an $N$th-order in the momenta integral of the motion for the Hamiltonian $\mathcal H_N$ (\ref{hamN}).
\end{proof}

By symmetry of the Hamiltonian $\mathcal H_N$ (\ref{hamN}), it is clear that the permutation of indices $1\leftrightarrow 2$    in \eqref{eq:I} provides another integral of the motion  $\mathcal I'_N$ which, obviously, is not functionally independent of the three functions presented in Theorem~\ref{teor1}: $\mathcal C$, $\mathcal I_N$ and $\mathcal H_N$. 
In addition, there exists a relationship among the above four functions. These results are characterised by the following statement.

\begin{proposition}
\label{prop1}
(i) The Hamiltonian $\mathcal H_N$ (\ref{hamN}) is also endowed with the $N$th-order in the momenta integral of motion    given by
\begin{equation}
\label{eq:II}
\mathcal I_N' = p_1^2 + \sum_{n=1}^N \gamma_{n} \sum_{j=0}^{\varphi(n)} p_1^{n-j} p_2^j \, Q^{(n-j,j)}(q_2,q_1),
\end{equation}
where $\varphi(n)$ is defined by (\ref{a1}) and $Q^{(n-j,j)}(q_2,q_1)$ are the homogeneous polynomials (\ref{eq:Q_ab}) and (\ref{eq:Q}) 
obtained through the interchange $q_1\leftrightarrow q_2$, that is, $\mathcal I_N'(q_1,p_1,q_2,p_2)=\mathcal I_N(q_2,p_2,q_1,p_1)$ (\ref{eq:I}). 
The set   $\{\mathcal H_N, \mathcal I'_N, \mathcal C\}$ is formed by three functionally independent functions.

\noindent
(ii)   The four functions  $\{\mathcal H_N, \mathcal I_N,  \mathcal I'_N, \mathcal C\}$ are subjected to the   relation
\begin{equation}
\label{a2}
\mathcal H_N=\mathcal I_N + \mathcal I_N' + \!\! \sum_{k=1}^{\varphi(N+1)/2} \!\!  (-1)^{k} \,  \gamma_{2k}\,\mathcal C^{2k}   .
\end{equation}
\end{proposition}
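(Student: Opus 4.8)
The plan is to deduce part~(i) from a discrete symmetry of $\mathcal H_N$, and part~(ii) from a single polynomial identity obtained by collecting powers of the momenta, which reduces everything to the algebra of the homogeneous polynomials $Q^{(n-j,j)}$.

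For part~(i), observe that the exchange $\sigma:(q_1,q_2,p_1,p_2)\mapsto(q_2,q_1,p_2,p_1)$ is a canonical transformation, and that $\mathcal H_N$ written as in \eqref{hamN} is manifestly $\sigma$-invariant, since both $\mathbf p^2$ and $\mathbf q\boldsymbol{\cdot}\mathbf p$ are symmetric under the interchange $1\leftrightarrow 2$. Hence, if $\mathcal I_N$ is a constant of the motion of $\mathcal H_N$ then so is $\sigma^\ast\mathcal I_N$, and by construction $\sigma^\ast\mathcal I_N$ is exactly the function $\mathcal I_N'$ of \eqref{eq:II}. Functional independence of $\{\mathcal H_N,\mathcal I_N',\mathcal C\}$ then follows from that of $\{\mathcal H_N,\mathcal I_N,\mathcal C\}$ in Theorem~\ref{teor1}: since $\sigma$ is a diffeomorphism with $\sigma^\ast\mathcal H_N=\mathcal H_N$, $\sigma^\ast\mathcal I_N=\mathcal I_N'$ and $\sigma^\ast\mathcal C=-\mathcal C$, the Jacobian of $\{\mathcal H_N,\mathcal I_N',\mathcal C\}$ has full rank wherever that of $\{\mathcal H_N,\mathcal I_N,\mathcal C\}$ does; alternatively this can be checked directly by setting all $\gamma_n=0$, as in the proof of Theorem~\ref{teor1}.

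For part~(ii), the key observation is that $\mathcal H_N$, $\mathcal I_N$ and $\mathcal I_N'$ are linear in the parameters $\gamma_1,\dots,\gamma_N$ while $\mathcal C$ is independent of them, so \eqref{a2} is a linear identity in the $\gamma_n$ and it suffices to check it coefficient by coefficient (equivalently, by induction on $N$ via \eqref{xa}). The $\gamma$-free term of \eqref{a2} is the trivial $p_1^2+p_2^2=p_2^2+p_1^2$; writing $\mathcal G_n=(\mathbf q\boldsymbol{\cdot}\mathbf p)^n$ and $\mathcal J_n$ as in \eqref{b1}, together with $\mathcal J_n':=\sum_{j=0}^{\varphi(n)}p_1^{n-j}p_2^{j}\,Q^{(n-j,j)}(q_2,q_1)$, the coefficient of $\gamma_n$ amounts to the single identity
\begin{equation}
\mathcal G_n=\mathcal J_n+\mathcal J_n'+\epsilon_n\,\mathcal C^{\,n},\qquad \epsilon_n:=(-1)^{n/2}\ (n\ \text{even}),\qquad \epsilon_n:=0\ (n\ \text{odd}),
\end{equation}
to be proved for all $n\in\mathbb N$. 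Both sides are homogeneous of degree $n$ in $(p_1,p_2)$ with coefficients in $(q_1,q_2)$, so I would match the coefficient of $p_1^{j}p_2^{n-j}$ for each $j$. Using $\mathcal G_n=\sum_{j}\binom nj q_1^{j}q_2^{n-j}\,p_1^{j}p_2^{n-j}$ and $\mathcal C^{\,n}=\sum_{j}\binom nj(-1)^{j}q_1^{n-j}q_2^{j}\,p_1^{j}p_2^{n-j}$, and reindexing $\mathcal J_n'$ so that its coefficient of $p_1^{j}p_2^{n-j}$ equals $Q^{(j,n-j)}(q_2,q_1)$, this reduces to the purely $q$-dependent identities
\begin{equation}
Q^{(n-j,j)}(q_1,q_2)+Q^{(j,n-j)}(q_2,q_1)=\binom nj\Big(q_1^{j}q_2^{n-j}-\epsilon_n(-1)^{j}q_1^{n-j}q_2^{j}\Big),\qquad 0\le j\le n,
\end{equation}
with the convention that $Q^{(a,b)}$ is dropped whenever $b\notin\{0,1,\dots,\varphi(a+b)\}$, which just removes one of the two terms on the left for $j$ near the ends of the range.

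The verification of this family of $q$-identities is the only genuine computation, and it proceeds along the same lines as the proof of Theorem~\ref{teor1}. One splits into the four cases according to the parities of $n$ and $j$ (which fix which of the types $Q_{ee},Q_{eo},Q_{oe},Q_{oo}$ the two polynomials belong to), so that the left-hand side becomes a sum of two truncated alternating binomial sums of the kind in \eqref{eq:Q_ab}; a direct manipulation using repeatedly the binomial identity $\binom nb(n-b)=\binom n{b+1}(b+1)$ of \eqref{xl}---which produces the same telescoping already exploited in the proof of Theorem~\ref{teor1}---then shows that all their monomials cancel pairwise except the single term $\binom nj q_1^{j}q_2^{n-j}$ and, when $n$ is even, the unpaired ``top'' term $-(-1)^{n/2}(-1)^{j}\binom nj q_1^{n-j}q_2^{j}$. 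Summing the matched coefficients over $j$ rebuilds $\mathcal G_n$ on one side and $\mathcal J_n+\mathcal J_n'+\epsilon_n\mathcal C^{\,n}$ on the other, establishing \eqref{a2}. The main obstacle, exactly as in Theorem~\ref{teor1}, is not conceptual but a matter of bookkeeping: correctly treating the boundary values of $j$ (where one $Q$ drops out and the $\mathcal C^{\,n}$-correction is most transparent) and keeping track of the sign prefactors $(-1)^{j/2}$, $(-1)^{(j-1)/2}$, $(-1)^{n/2}$, $(-1)^{(n\pm1)/2}$ attached to the four types in \eqref{eq:Q_ab}; once one representative case is carried out, the remaining three are entirely analogous.
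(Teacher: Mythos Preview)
Your proposal is correct and follows essentially the same route as the paper. For part~(i) you make explicit the exchange symmetry $\sigma$ that the paper treats as self-evident, and for part~(ii) both you and the paper reduce \eqref{a2} to the single identity $\mathcal G_n=\mathcal J_n+\mathcal J_n'+\epsilon_n\,\mathcal C^{\,n}$ (the paper's \eqref{a5}), after which the paper packages the passage from this to \eqref{a2} as an induction on $N$ while you invoke the linearity in $\gamma_n$ directly---these are equivalent, as you note. If anything, your coefficient-matching outline for the $q$-identities is more detailed than the paper, which simply asserts \eqref{a5} ``after some long computations''.
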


\begin{proof}
The only non-trivial fact  to be proved is that the relationship (\ref{a2}) holds. The procedure is   similar to the one performed in the proof of Theorem~\ref{teor1} which is quite cumbersome. We  thus restrict ourselves to outline the main steps of the proof.

Let us consider the functions $\mathcal G_N$ and $ \mathcal J_N $ (\ref{b1}) along with a new function $  \mathcal J'_N$ related to $\mathcal I_N' $ (\ref{eq:II})  in the form
\be
  \mathcal J'_N = \sum_{j=0}^{\varphi(N)} p_1^{N-j} p_2^j \,Q^{(N-j,j)}(q_2,q_1),\qquad \mathcal I'_N = \mathcal I'_{N-1} + \gamma_N \mathcal J'_N ,
  \label{a4}
\ee
that is, $ \mathcal J'_N(q_1,p_1,q_2,p_2)=\mathcal J_N(q_2,p_2,q_1,p_1)$. Next, after some long computations, we obtain the following relations according to the parity of $N$:
\begin{equation}
\label{a5}
\begin{array}{ll}
\displaystyle{N\ \mbox{\rm even:}}
   &\quad\displaystyle{  \mathcal G_N=\mathcal J_N + \mathcal J_N' +   (-1)^{N/2} \,   \mathcal C^{N}     } .   
  \\[4pt]
\displaystyle{N\ \mbox{\rm odd:}}   &\quad\displaystyle{    \mathcal G_N=\mathcal J_N + \mathcal J_N'     } .
\end{array}
\end{equation}
Now we proceed by applying mathematical induction. It is straightforward to show that the relation (\ref{a2}) holds for a low value of $N$. 
Assuming that  (\ref{a2})  is valid for $(N-1)$ we shall  prove that such equation   also holds   for $N$, distinguishing the parity of $N$.

Firstly, let $N$ be {\em even}. By taking into account (\ref{xa}), the expression (\ref{a2}) for $\mathcal H_{N-1}$ and (\ref{a5}), we obtain that
\begin{equation}
\label{a6}
\begin{split}
\mathcal H_{N}&=\mathcal H_{N-1}+\gamma_N \mathcal G_N\\
&=\mathcal I_{N-1} + \mathcal I_{N-1}' + \!\! \sum_{k=1}^{\varphi(N)/2} \!\!  (-1)^{k} \,  \gamma_{2k}\,\mathcal C^{2k}+ \gamma_N\left(\mathcal J_N + \mathcal J_N' +   (-1)^{N/2} \,   \mathcal C^{N}  \right) .
\end{split}
\end{equation}
The equations (\ref{xa}) and (\ref{a4}) lead to $\mathcal I_{N}$ and $ \mathcal I_{N}'$ in the above result. Since ${\varphi(N+1)/2}=N/2$ we also recover the complete sum in the relation (\ref{a2}) (note that ${\varphi(N)/2}=N/2-1$). And, secondly, let $N$ be {\em odd}. Now we find that 
\begin{equation}
\label{a7}
\begin{split}
\mathcal H_{N} =\mathcal I_{N-1} + \mathcal I_{N-1}' + \!\! \sum_{k=1}^{\varphi(N)/2} \!\!  (-1)^{k} \,  \gamma_{2k}\,\mathcal C^{2k}+ \gamma_N\left(\mathcal J_N + \mathcal J_N'   \right) .
\end{split}
\end{equation}
In this case, ${\varphi(N) }={\varphi(N+1) }=N-1$, so that we have proven the relation (\ref{a2}).
 \end{proof}

The results of Proposition~\ref{prop1} strongly indicate a quite different behaviour of the Hamiltonian $\mathcal H_N$ (\ref{hamN})  according to the superposition of either even or odd potential terms  $\mathcal G_n= (\mathbf{q} \boldsymbol{\cdot}  \mathbf{p})^n$ determined by the coefficients $\gamma_n$. In fact,   if we only consider {\em odd} terms in the potential, \emph{i.e.}~$\gamma_{2k} = 0$ for all $k \in \mathbb N$, then the relation  (\ref{a2})  reduces to
\begin{equation}
\mathcal H_N= \mathcal I_N + \mathcal I_N'  .
\end{equation}

Consequently,  Theorem~\ref{teor1} and Proposition~\ref{prop1} extend the results for the classical Zernike Hamiltonian of Theorem~\ref{teor0} to any arbitrary superposition of momentum dependent potentials $ (\mathbf{q} \boldsymbol{\cdot}  \mathbf{p})^n$.  In particular, setting $N=2$  we find that
\be
\mathcal H_{\rm Zk} \equiv\mathcal H_2,\qquad  \mathcal I\equiv  \mathcal I_2  ,\qquad  \mathcal I'\equiv  \mathcal I'_2  
,\qquad \mathcal H_2=\mathcal I_2 + \mathcal I_2' - \gamma_{2}\,\mathcal C^{2},
\ee
thus recovering, as a particular case of (\ref{a2})  (note that $\varphi(3)/2=1$),  the equation (75) in~\cite{PWY2017zernike}.

 In addition, it is immediate to express all the above results in  polar variables by means of the canonical transformation  (\ref{ze}); for instance, the Hamiltonian 
  $\mathcal H_N$ (\ref{hamN}) becomes
  \be
  \mathcal H_N=p_r^2 + \frac{p_\phi^2 }{r^2} +\sum_{n=1}^N \gamma_n (r p_r)^n.
  \label{a8}
 \ee


\subsection{Superintegrable systems on the   sphere and the hyperbolic space}
\label{s32}

Taking into account the interpretation carried out in Section~\ref{s21} for the Zernike system on curved spaces   and presented   in Proposition~\ref{prop0}, we  can now apply the results of Theorem~\ref{teor1} and Proposition~\ref{prop1} 
to $\mathbf S^2$ and $\mathbf H^2$. This is summarized  in the following statement.


\begin{proposition} 
\label{prop2}
Let $\{ \rho,\phi, p_\rho ,p_\phi\}$ be a set of canonical  geodesic polar  variables.  \\
(i) The   superintegrable    Hamiltonian  $\mathcal H_N$ (\ref{hamN})   can be written in these variables on $\mathbf S^2$ and $\mathbf H^2$, with constant Gaussian curvature $\kappa=-\gamma_2\ne 0$, through the 
 canonical transformation (\ref{zl}), namely
 \be
\mathcal H_N =  p_\rho^2 + \frac{p_\phi^2 }{ \Sk^2_\kk( \rho)} + \gamma_1  \Tk_\kk( \rho)\,p_\rho + \sum_{n=3}^N \gamma_n\bigl( \Tk_\kk( \rho)\,p_\rho \bigr)^n .
\label{a9}
\ee
The domain for  the variables $(\rho, \phi) $    is  again given by  $\phi\in[ 0, 2\pi)$ and (\ref{zn}).
\\
(ii) By means of the  canonical transformation (\ref{zl2}),    the    superintegrable    Hamiltonian  $\mathcal H_N$ (\ref{hamN})   can alternatively be expressed as
\bea
&& \mathcal H_N=\mathcal T_\kk+\mathcal U_\kk(\rho)+\mathcal V_\kk(\rho,p_\rho) ,\qquad \mathcal T_\kk =  p_\rho^2 + \frac{p_\phi^2 }{ \Sk^2_\kk( \rho)}  , \nonumber\\
&&\mathcal U_\kk(\rho)= -\frac{\gamma_1^2}{4}  \Tk^2_\kk( \rho)+\sum_{n=3}^N (-1)^n \gamma_n\left( \frac{\gamma_1}2\right)^{n}\!  \Tk^{2n}_\kk( \rho)
, \label{a10}\\
&&\mathcal V_\kk(\rho,p_\rho) = \sum_{n=3}^N\gamma_n  \sum_{k=1}^n (-1)^{n-k}  \binom{n}{k} \left(  \frac{\gamma_1 }2\right)^{{n-k}} \! \Tk^{2n-k}_\kk( \rho)p_\rho ^k,
\nonumber
\eea
that is,  $\mathcal T_\kk$ is the kinetic energy  (\ref{zo}) on the curved space, $\mathcal U_\kk(\rho)$ is a central   potential and $\mathcal V_\kk(\rho,p_\rho) $ is a higher-order momentum-dependent potential.  
\end{proposition}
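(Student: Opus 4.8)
The plan is to obtain both assertions by pulling back the Euclidean expression of Theorem~\ref{teor1}. Since the canonical transformation (\ref{ze}) already brings $\mathcal H_N$ to the polar form (\ref{a8}), $\mathcal H_N=p_r^2 + p_\phi^2/r^2 +\sum_{n=1}^N \gamma_n (r p_r)^n$, it suffices to superpose on top of it the curvature-adapted change of variable (\ref{zh}), $r=\Sk_\kk(\rho)$ with $\kk=-\gamma_2$. This is a point transformation in configuration space that leaves $\phi$ (and hence $p_\phi$) untouched; since $\Sk_\kk'(\rho)=\Ck_\kk(\rho)$ by (\ref{zi}), the conjugate momenta are related by $p_\rho=\Ck_\kk(\rho)\,p_r$, so that $r p_r=\Tk_\kk(\rho)\,p_\rho$. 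The canonicity of the resulting maps (\ref{zl}) and (\ref{zl2}) is already guaranteed by Proposition~\ref{prop0}, so nothing new has to be checked on that account.

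For part~(i) I would simply insert these relations into (\ref{a8}). The only term that does not transform trivially is the quadratic one: using $\kk=-\gamma_2$ together with the identity $\Ck_\kk^2(\rho)+\kk\Sk_\kk^2(\rho)=1$ (immediate from (\ref{zi})) one finds
\be
p_r^2 + \gamma_2 (r p_r)^2 = \frac{p_\rho^2}{\Ck_\kk^2(\rho)}\bigl(1-\kk\Sk_\kk^2(\rho)\bigr)=p_\rho^2 ,
\ee
so that the $n=2$ potential term is exactly reabsorbed into the radial kinetic energy. Together with $p_\phi^2/r^2=p_\phi^2/\Sk_\kk^2(\rho)$, $\gamma_1(r p_r)=\gamma_1\Tk_\kk(\rho)p_\rho$ and $\gamma_n(r p_r)^n=\gamma_n\bigl(\Tk_\kk(\rho)p_\rho\bigr)^n$ for $n\ge3$, this yields (\ref{a9}). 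The admissible ranges of $(\rho,\phi)$ are inherited from (\ref{zn}) exactly as in Proposition~\ref{prop0}, being fixed by the requirement that $r=\Sk_\kk(\rho)$ stay positive.

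For part~(ii) I would start from (\ref{a9}) and perform the canonical momentum shift (\ref{zzo}), $p_\rho\mapsto p_\rho-\tfrac{\gamma_1}{2}\Tk_\kk(\rho)$, which is precisely the difference between (\ref{zl}) and (\ref{zl2}). As in Proposition~\ref{prop0}, the terms $p_\rho^2+\gamma_1\Tk_\kk(\rho)p_\rho$ collapse to $p_\rho^2-\tfrac{\gamma_1^2}{4}\Tk_\kk^2(\rho)$. For each $n\ge3$ the term becomes $\gamma_n\Tk_\kk^n(\rho)\bigl(p_\rho-\tfrac{\gamma_1}{2}\Tk_\kk(\rho)\bigr)^n$, which I would expand by the binomial theorem as
\be
\gamma_n\sum_{k=0}^{n}\binom{n}{k}(-1)^{n-k}\left(\frac{\gamma_1}{2}\right)^{n-k}\Tk_\kk^{2n-k}(\rho)\,p_\rho^{k}.
\ee
Collecting the $k=0$ summands (pure functions of $\rho$) into $\mathcal U_\kk$ and the $k\ge1$ summands into $\mathcal V_\kk$, and summing over $n$ from $3$ to $N$, reproduces (\ref{a10}) with $\mathcal T_\kk$ as in (\ref{zo}).

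The whole argument is elementary; there is no genuine obstacle beyond the index and sign bookkeeping in the binomial expansion and recalling the single trigonometric identity $\Ck_\kk^2+\kk\Sk_\kk^2=1$ responsible for the disappearance of the quadratic term. Consequently the result can in fact be stated as an immediate corollary of Theorem~\ref{teor1}, Proposition~\ref{prop0} and the substitutions (\ref{zh}) and (\ref{zzo}).
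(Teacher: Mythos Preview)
Your proposal is correct and follows essentially the same approach as the paper, which states the result as ``a direct consequence of Theorem~\ref{teor1}, the canonical transformations (\ref{zl}) and (\ref{zl2}) along with the definitions (\ref{zi}) and (\ref{zj}).'' You have simply made explicit the computations the paper leaves to the reader, in particular the absorption of the $\gamma_2$ term via $\Ck_\kk^2+\kk\Sk_\kk^2=1$ and the binomial expansion yielding $\mathcal U_\kk$ and $\mathcal V_\kk$.
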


\begin{proof}
It is a direct consequence of Theorem \ref{teor1},  the canonical transformations \eqref{zl} and  (\ref{zl2})  along with  the definitions \eqref{zi} and \eqref{zj}.
\end{proof}


The expression (\ref{a9}) shows that the initial Hamiltonian $\mathcal H_N$  (\ref{hamN})  can be seen as a    superposition of  higher-order momentum-dependent potentials (except for the quadratic term) on $\mathbf S^2$ and $\mathbf H^2$, similarly to the Euclidean case. However, 
in its alternative form (\ref{a10}), $\mathcal H_N$  can be regarded as a superposition of   the isotropic 1\,:\,1   curved oscillator  with even-order     anharmonic curved oscillators~\cite{Ballesteros2007,JNMP2008} within $\mathcal U_\kk(\rho)$ plus another superposition of  higher-order momentum-dependent potentials through the term  $\mathcal V_\kk(\rho,p_\rho) $.  
In this respect,  it is worth stressing  the prominent role played by the coefficient $\gamma_1$ in the expression  (\ref{a10}) in contrast to  (\ref{a9}). Since $\gamma_1$ is arbitrary we can set it equal to zero so that   both   expressions for  $\mathcal H_N$ (\ref{a9})  and (\ref{a10}) do coincide (and both canonical transformations (\ref{zl}) and (\ref{zl2}) as well).
In this case,  $\mathcal U_\kk(\rho)$ and $\mathcal V_\kk(\rho,p_\rho) $ (\ref{a10})  reduce to
\be
\gamma_1=0\!:\qquad \mathcal U_\kk(\rho)=0,\qquad  \mathcal V_\kk(\rho,p_\rho)=  \sum_{n=3}^N\gamma_n   \! \Tk^{n}_\kk( \rho)p_\rho ^n ,
\ee
and, consequently, there does not exist   an alternative interpretation in terms of  curved oscillators.

 We also remark that the flat limit $\kk\to 0$  ({\em i.e.}, $\gamma_2=0$) is well defined in all the results of Proposition~\ref{prop2}  leading to the corresponding expressions in $\mathbf E^2$ in a consistent way. Recall that under this  flat limit the geodesic parallel coordinates $(\rho,\phi)$ reduce to the usual polar ones $(r,\phi)$ (see (\ref{zi}) and (\ref{zj})).
 In particular, if we apply the limit $\kk\to 0$ to 
$\mathcal H_N$  (\ref{a9}) we just recover its form in polar variables (\ref{a8}) with $\gamma_2=0$. And if we now compute the limit  $\kk\to 0$ on the expressions (\ref{a10})  we directly obtain that
\bea
&& \mathcal H_N=\mathcal T_0+\mathcal U_0(r)+\mathcal V_0(r,p_r) ,\qquad \mathcal T_0 =  p_r^2 + \frac{p_\phi^2 }{  r^2}  , \nonumber\\
&&\mathcal U_0(r)= -\frac{\gamma_1^2}{4} \, r^2 +\sum_{n=3}^N (-1)^n \gamma_n \left( \frac{\gamma_1}{2} \right)^n\!  r^{2n}  
, \label{a110}\\
&&\mathcal V_0(r,p_r) = \sum_{n=3}^N\gamma_n  \sum_{k=1}^n (-1)^{n-k}  \binom{n}{k} \left(  \frac{\gamma_1}{2}    \right)^{n-k}\!  r^{2n-k} p_r ^k. 
\nonumber
\eea
The central potential $\mathcal U_0(r)$ corresponds to a superposition of anharmonic  Euclidean oscillators which, in arbitrary dimension,   were proposed in~\cite{Ballesteros2007,JNMP2008}  from a Poisson $\mathfrak{sl}(2,\mathbb R)$-coalgebra approach.
The   same results can also be obtained by applying the flat counterpart of the curved canonical transformation (\ref{zl2}) to 
 $\mathcal H_N$ (\ref{hamN}), so  with $\kk=\gamma_2=0$, or by substituting $ p_r =  \tilde p_r-  {\gamma_1}  r/2$  in $\mathcal H_N$ (\ref{a8})  with $\gamma_2=0$ and then removing the tilde in $  p_r$ (see (\ref{zzo})).


\sect{Examples and algebra of constants of the motion}
\label{s4}

In this section we illustrate the results of Theorem~\ref{teor1} and Proposition~\ref{prop1}  by explicitly writing down  the main  expressions associated with the   Hamiltonian $\mathcal H_N$  (\ref{hamN})  for some values of $N$ and, furthermore, we study the symmetry algebra, determined by the integrals of  motion, thus generalizing the cubic `Higgs' algebra (\ref{zd}) to higher-order polynomial symmetry algebras.

For this purpose we  present in Table~\ref{table1} the polynomials $Q^{(N-j,j)}$  \eqref{eq:Q}  coming from  $Q^{(N-j,j)}_{ab}$  \eqref{eq:Q_ab} which are   involved in the constant of the motion $\mathcal I_N$ (\ref{eq:I}) up to $N=8$. Thus  the expressions for 
$\mathcal I_N$ can be  obtained straightforwardly and the constant of the motion $\mathcal I'_N$ (\ref{eq:II}) can be  deduced simply by interchanging the  indices $1\leftrightarrow 2$ in the canonical variables. With this information the general relationship (\ref{a2}) among the four functions $\{H_N,\mathcal I_N,\mathcal I'_N, \mathcal C\}$, with $\mathcal C$   given by 
 (\ref{zb}), can be easily checked. These results are displayed in  Table~\ref{table2}  up to $N=6$.

As an additional  relevant property of $\mathcal H_N$  (\ref{hamN}), let us also construct its corresponding symmetry algebra understood as the algebra closed by its constants of the motion. From $\{\mathcal I_N,\mathcal I'_N, \mathcal C\}$ we define the following constants of the motion  similarly to (\ref{zzd}) (so following~\cite{PWY2017zernike}):
\be
 \mathcal L_1:= \mathcal C/2, \qquad  \mathcal L_2:=\bigl(\mathcal I'_N  -  \mathcal I_N \bigr)/2 , \qquad 
 \mathcal L_3:= \{  \mathcal L_1,\mathcal L_2\}.
\ee
Although we have not been able to deduce a general and closed expression for the  symmetry algebra  for arbitrary $N$, which remains as an open problem, we have obtained  that the three above constants of the motion satisfy the following  generic  Poisson brackets  up to $N=8$:
\be
\{ \mathcal L_1,\mathcal  L_2\}=\mathcal L_3, \qquad \{ \mathcal L_1,\mathcal  L_3\}=-\mathcal L_2, \qquad  \{ \mathcal L_2, \mathcal L_3\}=\sum_{k=0}^{N-1}  \mathcal F_k(\gamma_n,  \mathcal H_N) \mathcal L_1^{2k+1}  ,\quad\ 1\le N\le 8,
\label{a13}
\ee
where $  \mathcal F_k$ is a polynomial function depending on some coefficients belonging to the set $\{\gamma_1,\dots, \gamma_N\}$ and sometimes on  $\mathcal H_N$. Therefore, our conjecture is that for arbitrary $N$ the polynomial algebra (\ref{a13}) is of $(2N-1)$th-order and the   well-known cubic  Higgs algebra is recovered, as already shown, for the proper Zernike system with $N=2$ in (\ref{zd}).
The explicit expressions for the Poisson bracket $ \{ \mathcal L_2, \mathcal L_3\}$ are also written in  Table~\ref{table2}  up to $N=6$.


\begin{table}[t]
{\small
\caption{\small Polynomials $Q^{(N-j,j)}$  \eqref{eq:Q}  from  $Q^{(N-j,j)}_{ab}$  \eqref{eq:Q_ab}  appearing in the constant of the motion $\mathcal I_N$ (\ref{eq:I}) of  the        Hamiltonian  $\mathcal H_N$ (\ref{hamN}) up to $N=8$. All of them are homogeneous polynomials of degree $N$.}
\label{table1}
 
\noindent
\begin{tabular}{l  l}

\\[-0.2cm]

\hline

\hline

\\[-0.2cm]
$\bullet$ $N=1\quad \varphi(1)=0$: &  $Q^{(1,0)}=q_2$ \\[0.25cm]

$\bullet$ $N=2\quad \varphi(2)=0$: &   $Q^{(2,0)}=q_1^2+q_2^2$ \\[0.25cm]

$\bullet$ $N=3\quad \varphi(3)=2$: &  $Q^{(3,0)}= q_2^3$\qquad   $Q^{(2,1)}= q_1^3+ 3 q_1 q_2^2$\qquad   $Q^{(1,2)}= -q_2^3$  \\[0.25cm]

$\bullet$ $N=4\quad \varphi(4)=2$: & $Q^{(4,0)}= -q_1^4+ q_2^4$\qquad   $Q^{(3,1)}= 4\bigl( q_1^3 q_2+   q_1 q_2^3 \bigr)$\qquad   $Q^{(2,2)}=q_1^4-q_2^4$ \\[0.25cm]

$\bullet$ $N=5\quad \varphi(5)=4$: & $Q^{(5,0)}=  q_2^5$\qquad   $Q^{(4,1)}= -q_1^5 + 5 q_1 q_2^4$\qquad   $Q^{(3,2)}=5 q_1^4 q_2+ 10 q_1^2 q_2^3-q_2^5$  \\[0.20cm]
 & $Q^{(2,3)}=  q_1^5- 5 q_1 q_2^4$\qquad   $Q^{(1,4)}= q_2^5$  \\[0.25cm]

$\bullet$ $N=6\quad \varphi(6)=4$: & $Q^{(6,0)}= q_1^6 +   q_2^6$\quad  $Q^{(5,1)}=-6\bigl(   q_1^5 q_2- q_1  q_2^5\bigr)$ \ \ $Q^{(4,2)}=- q_1^6+ 15 \bigl(  q_1^4 q_2^2+ q_1^2q_2^4\bigr) - q_2^6$  \\[0.20cm]
 & $Q^{(3,3)}= 6 \bigl(  q_1^5 q_2 -   q_1q_2^5\bigr) $\qquad   $Q^{(2,4)}=  q_1^6 + q_2^6$    \\[0.25cm]

$\bullet$ $N=7\quad \varphi(7)=6$: & $Q^{(7,0)}= q_2^7$\qquad   $Q^{(6,1)}=  q_1^7+7q_1  q_2^6$ \qquad 
$Q^{(5,2)}=- 7 q_1^6 q_2 + 21 q_1^2 q_2^5- q_2^7 $ \\[0.2cm]
 & $Q^{(4,3)}=  -q_1^7 - 7 q_1 q_2^6  +  21 q_1^5q_2^2 + 35 q_1^3 q_2^4$\qquad   $Q^{(3,4)}= 7 q_1^6q_2- 21 q_1^2 q_2^5+  q_2^7$  \\[0.22cm]
  & $Q^{(2,5)}= q_1^7 + 7 q_1 q_2^6   $ \qquad  $Q^{(1,6)}=-q_2^7   $  \\[0.25cm]

$\bullet$ $N=8\quad \varphi(8)=6$: & $Q^{(8,0)}= -q_1^8 + q_2^8$\quad   $Q^{(7,1)}= 8\bigl(  q_1^7q_2+ q_1  q_2^7 \bigr)$ \quad 
$Q^{(6,2)}= q_1^8 - 28 \bigl( q_1^6q_2^2-q_1^2 q_2^6 \bigr) - q_2^8  $ \\[0.2cm]
 & $Q^{(5,3)}=  -8\bigl( q_1^7q_2+q_1 q_2^7\bigr) +56\bigl(   q_1^5 q_2^3  + q_1^3q_2^5\bigl)$\quad   $Q^{(4,4)}= -  q_1^8 +28\bigl(   q_1^6 q_2^2 - q_1^2q_2^6\bigl) + q_2^8  $  \\[0.22cm]
  & $Q^{(3,5)}= 8\bigl( q_1^7q_2  +  q_1 q_2^7\bigr)   $ \qquad  $Q^{(2,6)}=q_1^8 - q_2^8   $  \\[0.25cm]
 \hline

\hline
\end{tabular}
}
 \end{table} 



\begin{table}[htp]
{\small
\caption{\footnotesize The constant of the motion $\mathcal I_N$ (\ref{eq:I}) of  the   superintegrable     Hamiltonian  $\mathcal H_N$ (\ref{hamN}) from the polynomials written in Table~\ref{table1} up to $N=6$. Recall that   $\mathcal I'_N$ (\ref{eq:II}) comes from the interchange of indices $1\leftrightarrow 2$ in the canonical variables while $\mathcal C = q_1 p_2 - q_2 p_1 $  (\ref{zb}). The relation (\ref{a2}) among the four functions $\{H_N,\mathcal I_N,\mathcal I'_N, \mathcal C\}$ is also displayed together with the Poisson bracket $\{ \mathcal L_2, \mathcal L_3\}$ (\ref{a13})  that determines a $(2N-1)$th-order polynomial symmetry algebra.}
\label{table2}

\noindent
\begin{tabular}{l  l}
 \\[-0.2cm]
 
\hline 

\hline 

\\[-0.2cm]
$\bullet$  & \!\!\!\! \!\!\!\!   $\mathcal I_1=p_2^2+\gamma_1 Q^{(1,0)} p_2=p_2^2+\gamma_1 q_2 p_2      $\qquad $\mathcal H_1=\mathcal I_1 + \mathcal I_1' $\qquad  $    \{ \mathcal L_2, \mathcal L_3\}=  -\gamma_1^2   \mathcal L_1 $     \\[8pt]

$\bullet$   & \!\!\!\! \!\!\!\!     $\mathcal I_2=p_2^2+\gamma_1 Q^{(1,0)} p_2+  \gamma_2 Q^{(2,0)} p_2^2        =p_2^2+\gamma_1 q_2 p_2 +  \gamma_2 (q_1^2+q_2^2)p_2^2       $           \\[4pt]
 & \!\!\!\! \!\!\!\!    $ \mathcal H_2=\mathcal I_2 + \mathcal I_2' - \gamma_{2}\,\mathcal C^{2}$\qquad  $   \{ \mathcal L_2, \mathcal L_3\}=-\left( \gamma_1^2 + 2 \gamma_2 \mathcal H_2\right)\mathcal L_1-   \gamma_2^2  (2 \mathcal L_1 )^3  $  \\[8pt]
   
  $\bullet$  & \!\!\!\! \!\!\!\!     $\mathcal I_3=p_2^2+\gamma_1 Q^{(1,0)} p_2+  \gamma_2 Q^{(2,0)} p_2^2   + \gamma_3\left(Q^{(3,0)} p_2^3+Q^{(2,1)} p_2^2p_1+  Q^{(1,2)} p_2p_1^2 \right)            $           \\[4pt]
   & \!\!\!\! \!\!\!     $\quad\,=p_2^2+\gamma_1 q_2 p_2 +  \gamma_2 (q_1^2+q_2^2)p_2^2  + \gamma_3\bigl( q_2^3 p_2^3+(q_1^3+ 3 q_1 q_2^2) p_2^2p_1-q_2^3 p_2p_1^2 \bigr)         $         \\[4pt]
 & \!\!\!\! \!\!\!\!    $ \mathcal H_3=\mathcal I_3+ \mathcal I_3' - \gamma_{2}\,\mathcal C^{2}$\qquad $   \{ \mathcal L_2, \mathcal L_3\}=-\left( \gamma_1^2 + 2 \gamma_2 \mathcal H_3\right)\mathcal L_1- \left(   \gamma_2^2- 2 \gamma_1\gamma_3\right)   (2 \mathcal L_1 )^3  - \tfrac 32 \gamma_3^2   (2 \mathcal L_1 )^5$  \\[8pt]

   $\bullet$   &  \!\!\!\! \!\!\!\!    $\mathcal I_4=p_2^2+\gamma_1 Q^{(1,0)} p_2+  \gamma_2 Q^{(2,0)} p_2^2   + \gamma_3\left(Q^{(3,0)} p_2^3+Q^{(2,1)} p_2^2p_1+  Q^{(1,2)} p_2p_1^2 \right)            $           \\[4pt]
  &  \!\!\!\! \!\!\!\!    $\qquad\quad   + \gamma_4\left(Q^{(4,0)} p_2^4+Q^{(3,1)} p_2^3p_1+  Q^{(2,2)} p_2^2p_1^2 \right)            $         \\[4pt]
    & \!\!\!\! \!\!\!\!     $\quad\,=p_2^2+\gamma_1 q_2 p_2 +  \gamma_2 (q_1^2+q_2^2)p_2^2  + \gamma_3\bigl( q_2^3 p_2^3+(q_1^3+ 3 q_1 q_2^2) p_2^2p_1-q_2^3 p_2p_1^2 \bigr)         $           \\[4pt]
          & \!\!\!\! \!\!\!\!     $\qquad\quad   + \gamma_4  \bigl( ( q_2^4-q_1^4) p_2^4+4 ( q_1^3 q_2+   q_1 q_2^3  ) p_2^3p_1+  (q_1^4-q_2^4) p_2^2p_1^2 \bigr)         $       \\[4pt]
  & \!\!\!\! \!\!\!\!    $ \mathcal H_4=\mathcal I_4+ \mathcal I_4' - \gamma_{2}\,\mathcal C^{2}+\gamma_{4}\,\mathcal C^{4}$  \\[4pt]
  & \!\!\!\! \!\!\!\!    $   \{ \mathcal L_2, \mathcal L_3\}=-\big( \gamma_1^2 + 2 \gamma_2 \mathcal H_4\big)\mathcal L_1- \big(   \gamma_2^2- 2 \gamma_1\gamma_3- 2\gamma_4 \mathcal H_4\big)  (2\mathcal L_1)^3  - \tfrac 32 \big(  \gamma_3^2 -2\gamma_2\gamma_4   \big)(2 \mathcal L_1 )^5- 2\gamma_4^2(2 \mathcal L_1 )^7 $ \\[8pt]
  
   $\bullet$   & \!\!\!\! \!\!\!\!     $\mathcal I_5=p_2^2+\gamma_1 Q^{(1,0)} p_2+  \gamma_2 Q^{(2,0)} p_2^2   + \gamma_3\left(Q^{(3,0)} p_2^3+Q^{(2,1)} p_2^2p_1+  Q^{(1,2)} p_2p_1^2 \right)            $          \\[4pt]
  &  \!\!\!\! \!\!\!\!    $\qquad\quad   + \gamma_4\left(Q^{(4,0)} p_2^4+Q^{(3,1)} p_2^3p_1+  Q^{(2,2)} p_2^2p_1^2 \right)   $
 \\[4pt]
    & \!\!\!\! \!\!\!\!    $\qquad\quad  + \gamma_5\left(Q^{(5,0)} p_2^5+Q^{(4,1)} p_2^4p_1+  Q^{(3,2)} p_2^3p_1^2+  Q^{(2,3)} p_2^2p_1^3+  Q^{(1,4)} p_2 p_1^4 \right)             $            \\[4pt]
    & \!\!\!\! \!\!\!\!     $\quad\,=p_2^2+\gamma_1 q_2 p_2 +  \gamma_2 (q_1^2+q_2^2)p_2^2  + \gamma_3\bigl( q_2^3 p_2^3+(q_1^3+ 3 q_1 q_2^2) p_2^2p_1-q_2^3 p_2p_1^2 \bigr)         $           \\[4pt]
          & \!\!\!\! \!\!\!\!     $\qquad\quad   + \gamma_4  \bigl( ( q_2^4-q_1^4) p_2^4+4 ( q_1^3 q_2+   q_1 q_2^3  ) p_2^3p_1+  (q_1^4-q_2^4) p_2^2p_1^2 \bigr)         $         \\[4pt]
           & \!\!\!\! \!\!\!\!     $\qquad\quad   + \gamma_5  \bigl( q_2^5  p_2^5+ ( 5 q_1 q_2^4 -q_1^5 ) p_2^4p_1+  (5 q_1^4 q_2+ 10 q_1^2 q_2^3-q_2^5)p_2^3p_1^2+  ( q_1^5- 5 q_1 q_2^4)p_2^2p_1^3+ q_2^5p_2 p_1^4\bigr)         $           \\[4pt]
  & \!\!\!\! \!\!\!\!   $ \mathcal H_5=\mathcal I_5+ \mathcal I_5' - \gamma_{2}\,\mathcal C^{2}+\gamma_{4}\,\mathcal C^{4}$   \\[4pt]
  &  \!\!\!\! \!\!\!\!  $   \{ \mathcal L_2, \mathcal L_3\}=-\big( \gamma_1^2 + 2 \gamma_2 \mathcal H_5\big) \mathcal L_1- \big(   \gamma_2^2- 2 \gamma_1\gamma_3- 2\gamma_4 \mathcal H_5\big)  (2\mathcal L_1)^3  - \tfrac 32 \big(  \gamma_3^2 -2\gamma_2\gamma_4  +  2\gamma_1\gamma_5\big)(2 \mathcal L_1 )^5  $ \\[4pt]
&  \!\!\!\! \!\!\!\!  $\qquad\qquad\qquad - 2\big(\gamma_4^2 - 2\gamma_3\gamma_5\big)(2 \mathcal L_1 )^7  -\tfrac 52\, \gamma_5^2 (2 \mathcal L_1 )^9  $\ \\[8pt]

    $\bullet$   & \!\!\!\! \!\!\!\!     $\mathcal I_6=p_2^2+\gamma_1 Q^{(1,0)} p_2+  \gamma_2 Q^{(2,0)} p_2^2   + \gamma_3\left(Q^{(3,0)} p_2^3+Q^{(2,1)} p_2^2p_1+  Q^{(1,2)} p_2p_1^2 \right)            $           \\[4pt]
  &  \!\!\!\! \!\!\!\!    $\qquad\quad   + \gamma_4\left(Q^{(4,0)} p_2^4+Q^{(3,1)} p_2^3p_1+  Q^{(2,2)} p_2^2p_1^2 \right)   $
 \\[4pt]
    & \!\!\!\! \!\!\!\!    $\qquad\quad  + \gamma_5\left(Q^{(5,0)} p_2^5+Q^{(4,1)} p_2^4p_1+  Q^{(3,2)} p_2^3p_1^2+  Q^{(2,3)} p_2^2p_1^3+  Q^{(1,4)} p_2 p_1^4 \right)             $           \\[4pt]
        & \!\!\!\! \!\!\!\!    $\qquad\quad  + \gamma_6\left(Q^{(6,0)} p_2^6+Q^{(5,1)} p_2^5p_1+  Q^{(4,2)} p_2^4p_1^2+  Q^{(3,3)} p_2^3p_1^3+  Q^{(2,4)} p_2^2 p_1^4 \right)             $           \\[4pt]
    & \!\!\!\! \!\!\!\!     $\quad\,=p_2^2+\gamma_1 q_2 p_2 +  \gamma_2 (q_1^2+q_2^2)p_2^2  + \gamma_3\bigl( q_2^3 p_2^3+(q_1^3+ 3 q_1 q_2^2) p_2^2p_1-q_2^3 p_2p_1^2 \bigr)         $         \\[4pt]
          & \!\!\!\! \!\!\!\!     $\qquad\quad   + \gamma_4  \bigl( ( q_2^4-q_1^4) p_2^4+4 ( q_1^3 q_2+   q_1 q_2^3  ) p_2^3p_1+  (q_1^4-q_2^4) p_2^2p_1^2 \bigr)         $            \\[4pt]
           & \!\!\!\! \!\!\!\!     $\qquad\quad   + \gamma_5  \bigl( q_2^5  p_2^5+ ( 5 q_1 q_2^4 -q_1^5 ) p_2^4p_1+  (5 q_1^4 q_2+ 10 q_1^2 q_2^3-q_2^5)p_2^3p_1^2+  ( q_1^5- 5 q_1 q_2^4)p_2^2p_1^3+ q_2^5p_2 p_1^4\bigr)         $            \\[4pt]
          & \!\!\!\! \!\!\!\!     $\qquad\quad   + \gamma_6 \big( (q_1^6 +   q_2^6)p_2^6-6 (   q_1^5 q_2- q_1  q_2^5 ) p_2^5p_1+ \big (15 (  q_1^4 q_2^2+ q_1^2q_2^4) - q_1^6- q_2^6\big)p_2^4p_1^2       $           \\[4pt]
               & \!\!\!\! \!\!\!\!     $\qquad\qquad\qquad\quad   +  6  (  q_1^5 q_2 -   q_1q_2^5 ) p_2^3p_1^3+  (q_1^6 + q_2^6)p_2^2 p_1^4 \big)           $          \\[4pt]
  & \!\!\!\! \!\!\!\!   $ \mathcal H_6=\mathcal I_6+ \mathcal I_6' - \gamma_{2}\,\mathcal C^{2}+\gamma_{4}\,\mathcal C^{4}-\gamma_{6}\,\mathcal C^{6}$    \\[4pt]
  &  \!\!\!\! \!\!\!\!  $   \{ \mathcal L_2, \mathcal L_3\}=-\big( \gamma_1^2 + 2 \gamma_2 \mathcal H_6\big) \mathcal L_1- \big(   \gamma_2^2- 2 \gamma_1\gamma_3- 2\gamma_4 \mathcal H_6\big)  (2\mathcal L_1)^3  - \tfrac 32 \big(  \gamma_3^2 -2\gamma_2\gamma_4  +  2\gamma_1\gamma_5+2\gamma_6 \mathcal H_6 \big)(2 \mathcal L_1 )^5  $  \\[4pt]
&  \!\!\!\! \!\!\!\!  $\qquad\qquad\qquad - 2\big(\gamma_4^2 - 2\gamma_3\gamma_5+ 2\gamma_2\gamma_6\big)(2 \mathcal L_1 )^7  -\tfrac 52\big( \gamma_5^2-2\gamma_4\gamma_6 \big)(2 \mathcal L_1 )^9- 3 \gamma_6^2(2 \mathcal L_1 )^{11}   $ \\[8pt]

  \hline

\hline
\end{tabular}

}
 \end{table} 

 \newpage
 \clearpage
 
 \pagebreak[4]
 
\sect{Superintegrable perturbations of   the classical Zernike system}
\label{s5}

So far, we have proven the superintegrability property of the   Hamiltonian  $\mathcal H_N$  (\ref{hamN}) on $\mathbf E^2$ in Theorem~\ref{teor1} and, then,   established a natural  interpretation of these results on $\mathbf S^2$ and $\mathbf H^2$ in Proposition~\ref{prop2}. Let us now focus  on the original  classical Zernike system.

The proper    Zernike system $\mathcal H_{\rm Zk} $ (\ref{za}) arises by setting $N=2$ in $\mathcal H_N$  (\ref{hamN})  such that the coefficient $\gamma_1$ is a pure imaginary number while $\gamma_2$ is real~\cite{PWY2017zernike}. In this section let us set
\be
\gamma_1=2{\rm i}\omega, \quad \omega\in \mathbb R,\qquad \gamma_2=-\kappa, \quad \kk\in \mathbb R.
\label{a14}
\ee
Then $\mathcal H_{\rm Zk} $ (\ref{za}) becomes
\be
  \mathcal H_{\rm Zk}= \mathbf{p}^2 +2{\rm i}\omega (\mathbf{q}  \boldsymbol{\cdot} \mathbf{p})-\kk 
  (\mathbf{q}  \boldsymbol{\cdot} \mathbf{p})^2 .
  \label{a15}
  \ee
 Thus $  \mathcal H_{\rm Zk}$ can be seen   as superposition of a   linear momentum-dependent imaginary potential and a real quadratic one on $\mathbf E^2$, or as a single  linear momentum-dependent imaginary potential on $\mathbf S^2$ $(\kk>0)$  and $\mathbf H^2$  $(\kk<0)$ with kinetic energy given by $\mathbf{p}^2-\kk   (\mathbf{q}  \boldsymbol{\cdot} \mathbf{p})^2 $.
Hence on these curved spaces   $(q_1,q_2)$ can be thought as projective coordinates.
 Recall that the problem of dealing with such   imaginary potential was already analyzed and solved in~\cite{PWY2017zernike}. 
  In fact, if we apply the canonical transformation (\ref{zl2}) to (\ref{a15}) with the identification (\ref{a14}) we obtain a real Hamiltonian (\ref{zo})
  reading as
 \be
\mathcal H_{\rm Zk} =\mathcal T_\kk+\mathcal U_\kk(\rho) ,\qquad \mathcal T_\kk =  p_\rho^2 + \frac{p_\phi^2 }{ \Sk^2_\kk( \rho)}  , \qquad \mathcal U_\kk(\rho)= \omega^2  \Tk^2_\kk( \rho)  ,
\label{a16}
\ee
reproducing the isotropic 1\,:\,1   curved  (Higgs) oscillator on $\mathbf S^2$  and $\mathbf H^2$  with frequency $\omega$ as discussed after Proposition~\ref{prop0}. Since $\mathcal H_{\rm Zk} $ determines a superintegrable system, all bounded trajectories  are   periodic  and, in this case, correspond to  ellipses, that is,  to  a Lissajous  1\,:\,1 curve~\cite{PWY2017zernike,BaHeMu13,Kuruannals}. Such trajectories can be drawn directly from the expression (\ref{a16}) or by considering their real part  from (\ref{a15}).

 From this viewpoint, if we add some $\gamma_N$-potentials  with $N\ge 3$ to $ \mathcal H_{\rm Zk}$ either in the form $\mathcal H_N$  (\ref{hamN}) or in (\ref{a10}), we obtain imaginary and real superintegrable perturbations of $\mathcal H_{\rm Zk} $.
For instance, if we consider a single $\gamma_3$-potential, we find from (\ref{a15}) a cubic superintegrable perturbation   given by
\be
  \mathcal H_3=  \mathcal H_{\rm Zk}+ \gamma_3  (\mathbf{q}  \boldsymbol{\cdot} \mathbf{p})^3,
  \label{a17}
\ee
while from (\ref{a16}) adopts the following more cumbersome expression 
\be
  \mathcal H_3=  \mathcal H_{\rm Zk}+{\rm i}\gamma_3\, \omega^3 \Tk^6_\kk( \rho)-\gamma_3\big(3 \omega^2  \Tk^5_\kk( \rho)p_\rho+ 3{\rm i}\omega     \Tk^4_\kk( \rho)p_\rho^2-   \Tk^3_\kk( \rho)p_\rho^3  \big).
  \label{a18}
\ee
The central potential determined by $\!\Tk^6_\kk( \rho)$ is real whenever $\gamma_3$ is a pure imaginary number. 
In this case, if one compute the real part of the trajectories either from (\ref{a17}) or from (\ref{a18}), one finds bounded trajectories which `deform' the ellipses associated with the initial Zernike system. Some of them are drawn in Fig.~1 with $\kk=+1$   for some imaginary values of $\gamma_3$ in the projective plane $(q_1,q_2)$ (so on the sphere). Similar trajectories arises for $\kk=-1$ (thus on $\mathbf H^2$).
 
 Likewise, we can consider a   quartic perturbation with $\gamma_3=0$ and $\gamma_4\ne 0$, that is, 
 \be
  \mathcal H_4=  \mathcal H_{\rm Zk}+ \gamma_4  (\mathbf{q}  \boldsymbol{\cdot} \mathbf{p})^4,
  \label{a19}
\ee
which in geodesic polar variables turns out to be
 \be
  \mathcal H_4=  \mathcal H_{\rm Zk}+ \gamma_4\, \omega^4 \Tk^8_\kk( \rho)+\gamma_4\big(4 {\rm i}\omega^3  \Tk^7_\kk( \rho)p_\rho-6 \omega^2  \Tk^6_\kk( \rho)p_\rho^2-4{\rm i}\omega     \Tk^5_\kk( \rho)p_\rho^3+  \Tk^4_\kk( \rho)p_\rho^4  \big).
  \label{a20}
\ee
 Then the central potential  associated with $\Tk^8_\kk( \rho)$ is real if $\gamma_4\in\mathbb R$. The  real part of the corresponding trajectories with $\kk=+1$  are shown in Fig.~2   for some real values of $\gamma_4$ in the projective plane $(q_1,q_2)$.

 \bigskip
 


\begin{figure}[H]
\label{fig1}
\begin{center}
\includegraphics[scale=0.72]{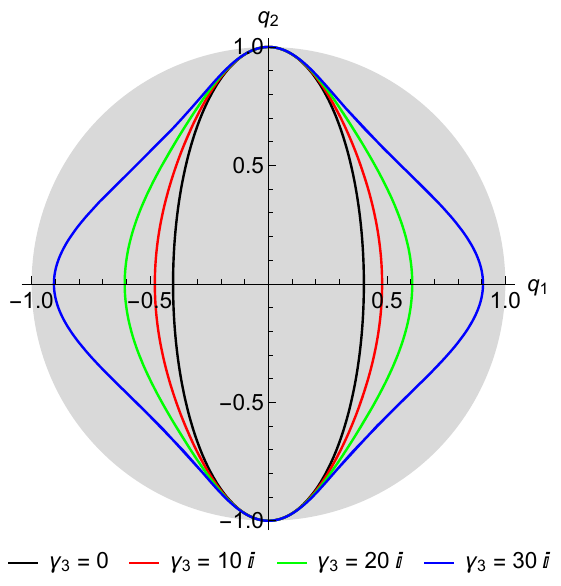} \hspace{1cm}
\includegraphics[scale=0.72]{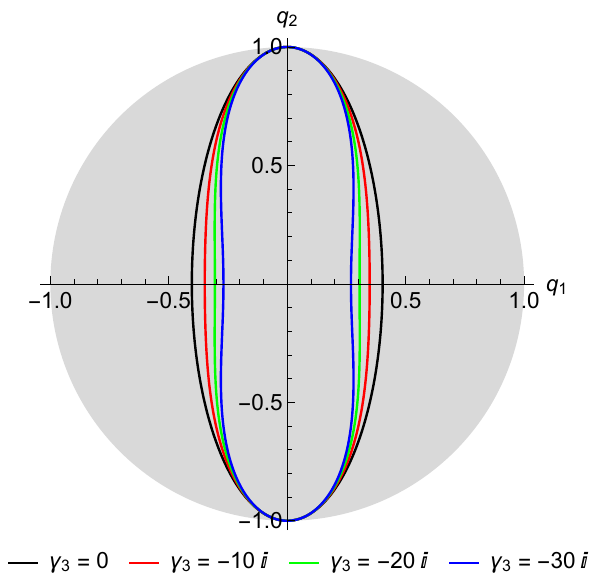} 
\caption{\small Plots of the real part of   trajectories from the cubic perturbation of the  Zernike system $ \mathcal H_3$  (\ref{a17}) with $\kk=+1$.}
\end{center}
\end{figure}

\vskip-0.25cm


 \begin{figure}[H]
 \label{fig2}
\begin{center}
\includegraphics[scale=0.72]{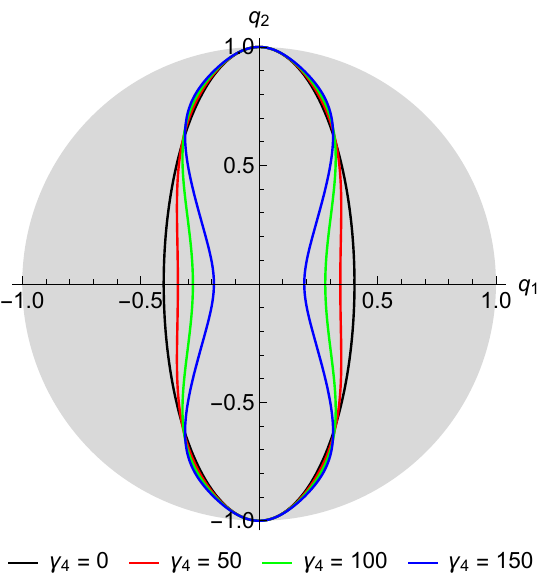} \hspace{1cm}
\includegraphics[scale=0.72]{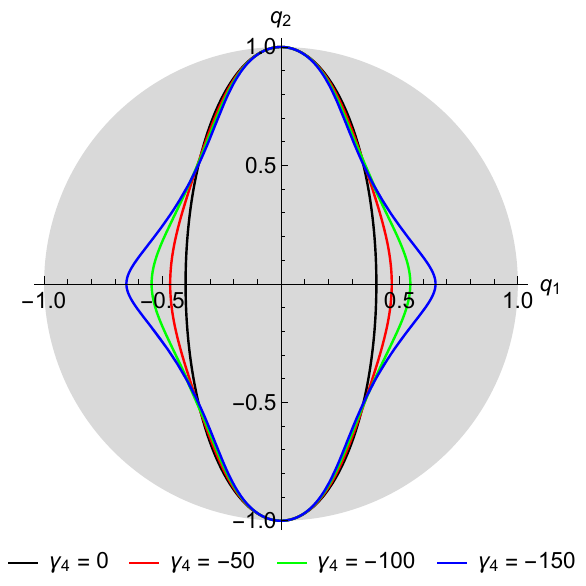} \\
\caption{\small Plots of the real part of   trajectories from the quartic perturbation of the  Zernike system  $\mathcal H_4$ (\ref{a19}) with $\kk=+1$.}
\end{center}
\end{figure}


From the expression (\ref{a10}) one can   easily check that central potential $\mathcal U_\kk(\rho)$ with 
$\gamma_1$ given by  (\ref{a14}) and with a single parameter $\gamma_N\ne 0$ ($N\ge 3$) is a real potential according to the parity of $N$: $\gamma_N$ must be a pure imaginary number when $N$ is odd, while $\gamma_N\in \mathbb R$ when $N$ is even. For these cases, it can be obtained  that the real part of the trajectory is bounded. We illustrate this fact by drawing the  
fifth-order perturbation of the   Zernike system in Fig.~3 and the sixth-order perturbation  in Fig.~4 with $\kk=+1$ and again in the projective plane $(q_1,q_2)$.

\bigskip

\begin{figure}[H]
 \label{fig3}
\begin{center}
\includegraphics[scale=0.71]{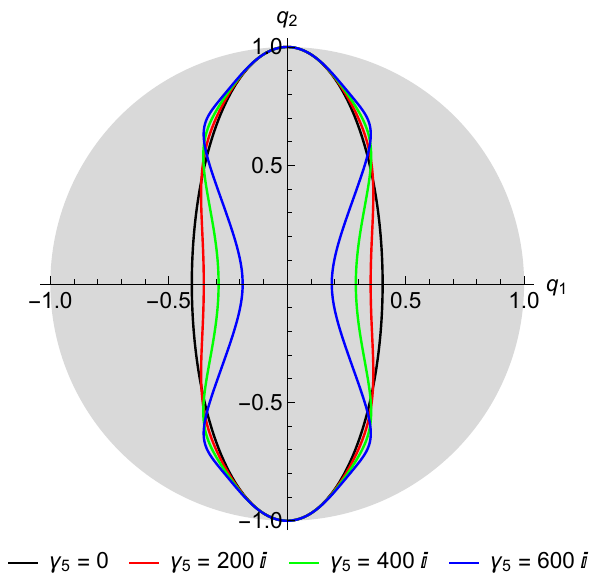} \hspace{1cm}
\includegraphics[scale=0.71]{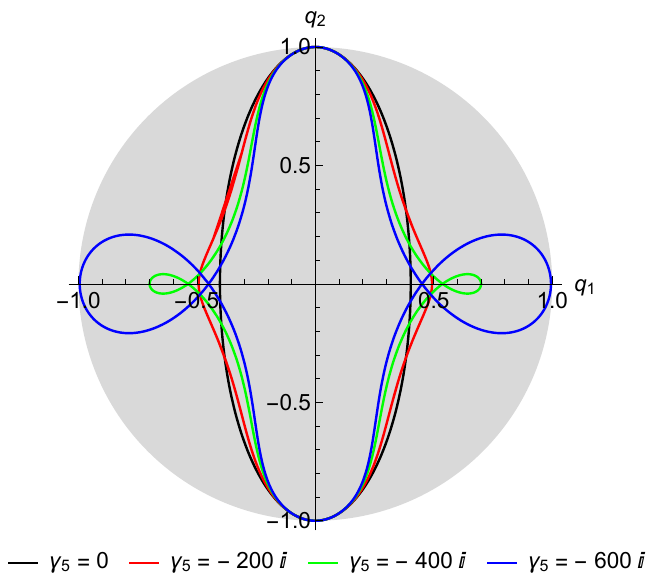} \\
\caption{\small Plots of the real part of   trajectories from the fifth-order perturbation   of the  Zernike system obtained from $\mathcal H_5$  (\ref{hamN})  under the identification (\ref{a14})  with $\kk=+1$ and with a single term $\gamma_5\ne0$.}
\end{center}
\end{figure}

\vskip-0.25cm


\begin{figure}[H]
\label{fig4}
\begin{center}
\includegraphics[scale=0.72]{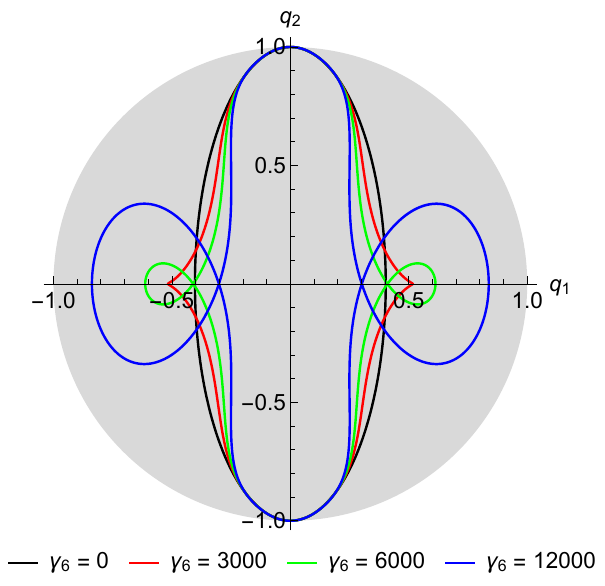} \hspace{1cm}
\includegraphics[scale=0.72]{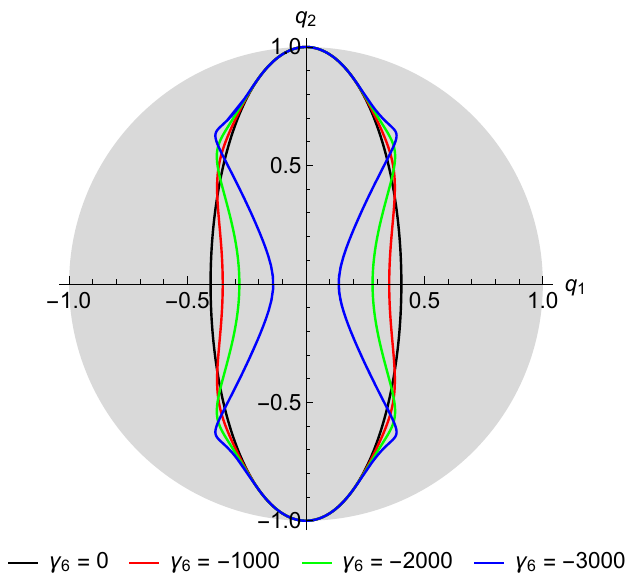} \\
\caption{\small Plots of the real part of   trajectories from the sixth-order perturbation   of the  Zernike system obtained from $\mathcal H_6$  (\ref{hamN})  under the identification (\ref{a14})  with $\kk=+1$ and with a single term $\gamma_6\ne0$.}
\end{center}
\end{figure}


 The superintegrable properties for the four particular perturbations of the   Zernike system here considered can  be extracted  straightforwardly from the general results presented in Table~\ref{table2} since this covers the cases with  $N\le 6$. Obviously, 
 one can always construct  superpositions of different higher-order  perturbations of the   Zernike system.


\sect{Conclusions and outlook}

Throughout this work we have constructed a new class of higher-order superintegrable momentum-dependent  Hamiltonians summarized in Theorem~\ref{teor1}, which allows for an arbitrary superposition of potentials   beyond the  linear and quadratic momentum-dependent  ones. Moreover, these systems  have not only been    interpreted on the 2D Euclidean plane  $\mathbf E^2$ but also on the sphere $\mathbf S^2$  and the hyperbolic plane $\mathbf H^2$ in Proposition~\ref{prop2}. The corresponding  higher-order  momentum-dependent   constants of the motion have been explicitly written and some algebraic properties have also been studied, such as the relationship among the constants of the motion  in Proposition~\ref{prop1} and the symmetry algebra of the integrals in Section~\ref{s4}.

It is worth recalling that the cornerstone of our construction is based in the superintegrable classical Zernike system~\cite{PWY2017zernike}  described in Theorem~\ref{teor0} together with its underlying Poisson $\mathfrak{sl}(2,\mathbb R)$-coalgebra symmetry, presented in Section~\ref{s22}, which holds for $\mathcal H_N$ (\ref{zz}) for any $N$.   From the latter property,    four open problems   naturally  arise which could be faced in order to generalize $\mathcal H_N$ and the results  of Theorem~\ref{teor1}:
\begin{itemize}

\item If we consider arbitrary real parameters $\lambda_i$ $(i=1,2)$ in the symplectic realization (\ref{zv}), we   obtain a new
integrable Hamiltonian $\mathcal H_{\lambda,N}$ generalizing the superintegrable $\mathcal H_N$ (\ref{zz})  via a superposition with a potential $\mathcal W_{\lambda}(q_1,q_2)$ as
\be
\mathcal H_{\lambda,N}=\mathcal H_N+\mathcal W_{\lambda}(q_1,q_2)=\mathbf{p}^2 + \sum_{n=1}^N \gamma_n (\mathbf{q}  \boldsymbol{\cdot} \mathbf{p})^n+\frac {\otra_1}{ q_1^2}  +\frac {\otra_2}{ q_2^2}\,  ,
\label{y1}
\ee
which is always endowed with the constant of the motion given by $ {C}^{(2)}$ (\ref{zw}). In $\mathbf E^2$, with $(q_1,q_2)$ identified with Cartesian coordinates, the $\lambda_i$-terms are  `centrifugal' (or Rosochatius--Winternitz) potentials such that they 
  provide centrifugal barriers when both constants are positive so restricting the trajectories to some quadrants in the Euclidean plane. In geodesic polar variables (\ref{zl2})  the additional potential $\mathcal W_{\lambda}$ becomes
  \be
\mathcal W_{\lambda}(\rho,\phi)=\frac {\otra_1}{ \Sk^2_\kk( \rho)\cos^2\phi}  +\frac {\otra_2}{ \Sk^2_\kk( \rho) \sin^2\phi}\, ,
  \ee
which can be interpreted as two {\em noncentral}    1\,:\,1 isotropic  curved oscillators on  $\mathbf S^2$ or as centrifugal barriers   on  $\mathbf H^2$ when both $\lambda_i>0$~\cite{BaHeMu13}.

\item The   $\mathfrak{sl}(2,\mathbb R)$-coalgebra symmetry~\cite{Ballesteros2007,BBHMR2009}  directly leads  to the   following quasi-maximally superintegrable generalization of the Hamiltonian  (\ref{zz}) in arbitrary dimension $d$:
\be
\mathcal H^{(d)}_{\lambda,N}=J_+^{(d)} + \sum_{n=1}^N \gamma_n   \left(J_3^{(d)}\right)^n =\sum_{i=1}^d p_i^2 + \sum_{n=1}^N \gamma_n\left( \sum_{i=1}^d q_i   p_i\right)^n+\sum_{i=1}^d \frac {\otra_i}{ q_i^2}    \, ,\qquad d\ge 2,
\label{y2}
\ee
 which, by construction, is endowed with $(2d-3)$ functionally independent `universal' constants of the motion~\cite{Ballesteros2007,BBHMR2009,Latini2019,Latini2021} and     can  be further interpreted on either $\mathbf E^d$,   $\mathbf S^d$ or $\mathbf H^d$.

\item The Hamiltonian   $\mathcal H^{(d)}_{\lambda,N}$ (\ref{y2})    can also be generalized   to    spaces of nonconstant curvature through (non-deformed) Poisson
$\mathfrak{sl}(2,\mathbb R)$-coalgebra spaces following~\cite{AEHR2007} which would allow   several possibilities for a generalized momentum-dependent potential.

\item And, finally,   the last possible generalization is   to consider 
Poisson--Hopf algebra deformations of $\mathfrak{sl}(2,\mathbb R)$~\cite{BBHMR2009,AHR2005,RBHM2007} which convey an additional quantum deformation parameter $q={\rm e}^z$ giving  rise to a  deformed classical Hamiltonian $\mathcal H^{(d)}_{z,\lambda,N}$ such that      $\lim_{z\to 0} \mathcal H^{(d)}_{z,\lambda,N}= \mathcal H^{(d)}_{\lambda,N}$. In this case, the deformation parameter $z$  would determine superintegrable perturbations of the initial (underformed) Hamiltonian  (\ref{y2}).

\end{itemize}

The crucial point to solve any of the  above four problems is to obtain the corresponding generalized counterpart of the constant of the motion $ \mathcal I_N$ (\ref{eq:I}), since both the  coalgebra and deformed coalgebra symmetries ensure the existence of $(2d-3)$ functionally independent constants of the motion. Clearly, these tasks are by no means trivial.

In contrast to the previous (open) discussion, it might be straightforward to apply  the results  of Theorem~\ref{teor1} and  Proposition~\ref{prop2} to the three  (1+1)D Lorentzian spacetimes of constant curvature, {\em i.e.},   the Minkowskian and (anti-)de Sitter spacetimes. The procedure requires to incorporate a second `contraction' parameter, say $\kk_2$, beyond the  curvature  of the space $\kk\equiv \kk_1$, depending on the speed of light $c$ as $\kk_2=-1/c^2$~\cite{conf,trigo}, which could be performed by analytic continuation. Therefore,  the `additional' constant of the motion $ \mathcal I_N$ (\ref{eq:I}) would   formally  hold but now in a 
Riemannian--Lorentzian form, so that no further cumbersome computations would be needed. For instance, under this approach the 
   Zernike  system    written as the natural Hamiltonian  given in Proposition~\ref{prop0} in geodesic polar variables (\ref{zo}),  with $\gamma_1=2 {\rm i} \omega$, turns out to be 
\be
\mathcal H_{{\rm Zk},\kk_1,\kk_2} =\mathcal T_{\kk_1,\kk_2}+\mathcal U_{\kk_1}(\rho) ,\qquad \mathcal T_{\kk_1,\kk_2} =  p_\rho^2 + \frac{p_\phi^2 }{\kk_2\Sk^2_{\kk_1}( \rho)}  , \qquad \mathcal U_{\kk_1}(\rho)=\omega^2  \Tk^2_{\kk_1}( \rho)  ,
\label{y4}
\ee
where $\mathcal T_{\kk_1,\kk_2}$ is the kinetic energy on the curved space and $\mathcal U_{\kk_1}(\rho)$ is the 1\,:\,1 isotropic curved oscillator. 
Hence, for $\kk_2=+1$ $(c={\rm i} )$  the results here presented for the three Riemannian spaces  of constant curvature would be recovered, meanwhile for $\kk_2<0$  ($c$ finite),  new results concerning Lorentzian spacetimes   would be obtained. We recall that the Hamiltonian (\ref{y4})  has been deeply studied in~\cite{HB2006} in      (2+1)-dimensions (see also~\cite{Petrosian1} for the specific  anti-de Sitter case).

To conclude, we would like to  comment   on what, in our opinion, is the main open problem of this work, which is precisely to obtain the quantum analogue of the superintegrable classical Hamiltonian $\mathcal H_N$ (\ref{zz}). Let us consider the usual quantum position $\hat{\mathbf{q}}$ and momenta  $\hat{\mathbf{p}}$ operators, with canonical Lie brackets and differential representation given by
\be
[\hat q_i,\hat p_j]={\rm i}\hbar \delta_{ij},\qquad \hat q_i \psi( \mathbf{q} ) =q_i\psi( \mathbf{q} ),\qquad \hat p_i \psi( \mathbf{q} )=-{\rm i}\hbar \,\frac{\partial   \psi( \mathbf{q} )}{\partial q_i}\, .
\label{y5}
\ee
From them, we  quantize the two-particle symplectic realization (\ref{zv}) (with $\lambda_i=0)$ in the form
\be
\hat J_-^{(2)}=\hat q_1^2+\hat  q_2^2\equiv \hat{\mathbf{q}}^2,    \qquad    
\hat J_+^{(2)}= \hat p_1^2  +\hat p_2^2\equiv \hat{\mathbf{p}}^2   ,   \qquad 
\hat J_3^{(2)}=\hat q_1 \hat p_1    + \hat q_2 \hat p_2\equiv  \hat{\mathbf{q}}  \boldsymbol{\cdot} \hat{\mathbf{p}} \,  .
\label{y6}
\ee
These operators close on a Lie algebra isomorphic to $\mathfrak{gl}(2)$:
\be
\bigl[ \hat J_3^{(2)}, \hat J_\pm^{(2)} \bigr]=\pm 2{\rm i}\hbar  \hat  J_\pm^{(2)},\qquad \bigl[ \hat J_-^{(2)}, \hat J_+^{(2)} \bigr]= 4 {\rm i}\hbar \hat J_3^{(2)}+ 4 \hbar^2 {\rm Id},
\label{y7}
\ee
where ${\rm Id}$ is the identity operator. Then we propose that the   quantization of $\mathcal H_N$ (\ref{zz}) is   defined by  the following quantum Hamiltonian
\be
\hat{\mathcal H}_N  = \hat J_+^{(2)} + \sum_{n=1}^N \gamma_n   \left(\hat J_3^{(2)}\right)^n  = \hat{\mathbf{p}}^2 + \sum_{n=1}^N \gamma_n ( \hat{\mathbf{q}}  \boldsymbol{\cdot} \hat{\mathbf{p}})^n   ,
\label{y8}
\ee
that is,
 \be
\hat{\mathcal H}_N \psi( \mathbf{q} )  =  -\hbar^2\left(\frac{\partial^2\psi( \mathbf{q} ) }{\partial q_1^2}+\frac{\partial^2\psi( \mathbf{q} ) }{\partial q_2^2} \right)+ \sum_{n=1}^N \gamma_n(-{\rm i}\hbar )^n
\left( q_1 \frac{\partial  }{\partial q_1}+   q_2 \frac{\partial  }{\partial q_2~} \right)^n\psi( \mathbf{q} )\, .
\label{y9} 
\ee
Thus $\hat{\mathcal H}_N$    is now endowed with a Lie $\mathfrak{gl}(2)$-coalgebra symmetry (instead of   a Poisson $\mathfrak{sl}(2,\mathbb R)$-coalgebra one). We stress that such a `direct' quantization does not work on the  constant of the motion $ \mathcal I_N$ (\ref{eq:I}) since serious ordering problems arise, so that additional terms  must be added in order  to obtain the quantum analogue  of $ \mathcal I_N$  and thus proving  quantum superintegrability of (\ref{y8}).

 Work on the above research  lines is currently in progress.


\section*{Acknowledgements}

\phantomsection
\addcontentsline{toc}{section}{Acknowledgements}

{This work has been partially supported by   Agencia Estatal de Investigaci\'on (Spain)  under grant  PID2019-106802GB-I00/AEI/10.13039/501100011033.}



\end{document}